\newcommand{\N}{\mathds{N}}
\newcommand{\R}{\mathds{R}}
\newcommand{\UF}{\mathcal{U}}
\newcommand{\F}{\mathfrak{F}}
\newcommand{\eps}{\varepsilon}
\newcommand{\set}[1]{\left\{#1\right\}}
\newcommand{\abs}[1]{\left|#1\right|}
\newcommand{\norm}[1]{\left\|#1\right\|}
\newcommand{\To}{\mathop{\rightarrow}}
\newcommandx{\E}[2][2=\empty]{\ifthenelse{\equal{#2}{\empty}}{\mathrm{E}}{\mathrm{E}_{#2}}\!\left(#1\right)}
\newcommand{\Prob}[1]{\mathrm{Pr}\!\left(#1\right)}
\newcommand{\psfragsize}{\small}
\newcommandx{\psf}[3][3=c]{\psfrag{#1}[#3][]{$\psfragsize{#2}$}}
\newcommandx{\psft}[3][3=c]{\psfrag{#1}[#3][]{\psfragsize{#2}}}
\renewcommand{\vec}{\bm}
\newcommand{\supp}{\text{supp}}
\newcommand{\argmin}{\mathop{\text{argmin}}}
\newcommand{\argmax}{\mathop{\text{argmax}}}
\newtheorem{thm}{Theorem}[section]{\bfseries}{\rmfamily}
\newtheorem{lem}[thm]{Lemma}{\bfseries}{\rmfamily}
{\bfseries}{\rmfamily}
\newtheorem{cor}[thm]{Corollary}{\bfseries}{\rmfamily}
\newtheorem{defn}[thm]{Definition}{\bfseries}{\rmfamily}
\newtheorem{exa}[thm]{Example}{\bfseries}{\rmfamily}
\newtheorem{rem}[thm]{Remark}{\itshape}{\rmfamily}
\definecolor{mygreen}{RGB}{28,172,0} 
\definecolor{mylilas}{RGB}{170,55,241}
\title{On Game-Theoretic Risk Management (Part Two)}
\author{Stefan Rass
\thanks{Universit\"{a}t Klagenfurt, Institute of Applied Informatics,
System Security Group, Universit\"{a}tsstrasse 65-67, 9020 Klagenfurt, Austria.
This work has been done in the course of consultancy for the EU Project
HyRiM (Hybrid Risk Management for Utility Networks; see https://hyrim.net),
led by the \emph{Austrian
Institute of Technology} (AIT; www.ait.ac.at). See the acknowledgement section.}\\
\texttt{stefan.rass@aau.at}}
\begin{document}

\maketitle

\begin{abstract}
\begin{center}
\textbf{Abstract}
\end{center}
The game-theoretic risk management framework put forth in the precursor
work ``Towards a Theory of Games with Payoffs that are
Probability-Distributions'' (\href{http://arxiv.org/abs/1506.07368}{ 	
arXiv:1506.07368 [q-fin.EC]}) is herein extended by algorithmic details on
how to compute equilibria in games where the payoffs are probability
distributions. Our approach is ``data driven'' in the sense that we assume
empirical data (measurements, simulation, etc.) to be available that can be
compiled into distribution models, which are suitable for efficient
decisions about preferences, and setting up and solving games using these
as payoffs. While preferences among distributions turn out to be quite
simple if nonparametric methods (kernel density estimates) are used,
computing Nash-equilibria in games using such models is discovered as
inefficient (if not impossible). In fact, we give a counterexample in which
fictitious play fails to converge for the (specifically unfortunate) choice
of payoff distributions in the game, and introduce a suitable tail
approximation of the payoff densities to tackle the issue. The overall
procedure is essentially a modified version of fictitious play, and is
herein described for standard and multicriteria games, to iteratively
deliver an (approximate) Nash-equilibrium. An exact method using linear
programming is also given.
\end{abstract}

\pagebreak

\tableofcontents \newpage
\section{Introduction}

Having laid the theoretical foundations in part one of this research (see
\cite{Rass2015b}), we now carry on describing the algorithmic aspects and
implementation notes for computing risk assurances on concrete games with
distributions as payoffs.

We start with a discussion on how to construct distributions from data in a
way that is suitable for efficiently deciding $\preceq$-preferences among the
empirical distribution estimates. In a nutshell, the $\preceq$-relation as
defined in \cite{Rass2015b} is as follows: assume that $R_1, R_2$ describe
the losses as bounded quantities between $1$ and some finite maximum (bounded
support), and let them have absolutely continuous measures w.r.t. the
Lebesgue or counting measure (assumption 1.3 in \cite{Rass2015b}).
\begin{defn}[Preference Relation over Probability Distributions]\label{def:preference}
Let $R_1,R_2$ be two bounded random variables $\geq 1$, whose distributions
are $F_1, F_2$. Write $m_{R_i}(k)$ for the $k$-th moment of $R_i$. We
\emph{prefer} $F_1$ \emph{over} $F_2$, written as
\begin{equation}\label{eqn:partial-ordering}
    F_1\preceq F_2:\iff \exists K\in\N\text{ s.t. }\forall k\geq K: m_{R_1}(k)\leq m_{R_2}(k)
\end{equation}
\emph{Strict preference} of $F_1$ over $F_2$ is denoted and defined as
\[
    F_1\prec F_2:\iff \exists K\in\N\text{ s.t. }\forall k\geq K: m_{R_1}(k)<m_{R_2}(k)
\]
Likewise, we define $F_1\equiv F_2\iff (F_1\preceq F_2)\land (F_2\preceq
F_1)$.
\end{defn}
Observe that $\equiv$ does not mean an identity (in the sense of equality)
between $F_1$ and $F_2$.

Concrete algorithms are given for estimating continuous distribution models
(section \ref{sec:payoff-estimation}), and on how to compare the following
pairs:
\begin{itemize}
  \item two continuous distributions with finite support (section
      \ref{sec:distribution-comparisons}) or infinite support (section
      \ref{sec:treating-infinite-supports}).
  \item continuous distribution vs. crisp number (section
      \ref{sec:det-vs-random}).
\end{itemize}

Chapter \ref{sec:computing-mgss} is devoted to algorithms for computing
multi-goal security strategies. It opens with a discussion on how to carry
over conventional fictitious play to $^*\R$ for one security goal (section
\ref{sec:one-dimensional-algorithms}), highlighting several nontrivial
pitfalls that must be avoided in a practical implementation. The full
algorithm is developed along a sequence of subsections, culminating in the
final description for the one-dimensional case in section
\ref{sec:one-dimensional-algorithms}. The generalization of the algorithm to
multicriteria distributions is derived on these grounds in section
\ref{sec:general-mgss-fp}.

The second major aspect of this work is computing security strategies in
multi-criteria games. These can be shown to correspond to Nash-equilibria in
properly transformed games, however, their computation is somewhat more
involved than in the real-valued case. For convenience of the reader, we
review the definition of multi-goal security strategies in section
\ref{sec:computing-mgss}, after having highlighted the practical and
theoretical obstacles in computing (general) Nash-equilibria in the special
kind of games that we consider. This discussion is subject of sections
\ref{sec:convergence-problems} and \ref{sec:numerics}.

\section{Comparing Distributions
Efficiently}\label{sec:distribution-comparisons} This section is devoted to
special cases of distribution models and how to compare them. In many cases,
we can avoid computing moment sequences, such as when the distribution can be
approximated or has compact support. The latter can be assured in kernel
density estimations using the Epanechnikov kernel, which is the first case
discussed now.

\subsection{Estimating Payoff Distributions from
Simulations}\label{sec:payoff-estimation} In light of assuming bounded
supports for all payoff distributions (see the introduction or
\cite{Rass2014}), it is useful to estimate payoff distributions from data
sets in a way that preserves compactness of the support and continuity of the
resulting distribution. To this end, we can construct a standard kernel
density estimator upon a kernel with compact support, such as the
Epanechnikov kernel
\begin{equation}\label{eqn:epanechnikov}
  k(x) := \left\{
            \begin{array}{ll}
              \frac 3 4(1-x^2), & \abs{x}\leq 1 \\
              0, & \hbox{otherwise,}
            \end{array}
          \right.
\end{equation}
plotted in figure \ref{fig:epanechnikov}.

\begin{figure}[b!]
  \centering
%
%
\includegraphics[scale=0.7]{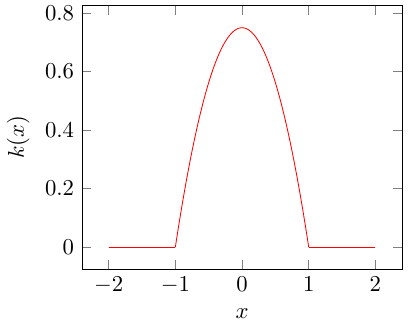}
  \caption{Plot of the Epanechnikov kernel}\label{fig:epanechnikov}
\end{figure}

Let $N=\set{x_1,\ldots,x_n}\subset\R$ be a sample of $n$ real-valued
simulation results, all of which have been harvested under a fixed (i.e.,
constant) configuration of choices (for example, strategies of all players in
the game under consideration).

Nadaraya's theorem then assures uniform convergence of the kernel density
estimator
\begin{equation}\label{eqn:kernel-density-estimator}
  \hat f(x) = \frac 1{n\cdot h}\sum_{i=1}^n k\left(\frac{x_i-x} h\right),
\end{equation}
towards the unknown payoff distribution density $dF/dx$, provided that the
latter is uniformly continuous, and the bandwidth parameter $h$ is set to
$h:=c\cdot n^{-\alpha}$ for constants $c>0$ and $0<\alpha<\frac 1 2$.
Practical heuristics (rules of thumb) are available and implemented in
various software packages, like \texttt{R} \cite{RDevelopmentCoreTeam2011}.

While this choice is convenient for evaluating the $\preceq$-relation,
computing equilibria requires a different choice for the kernel function,
which will be the Gaussian kernel (or any kernel with infinite support). We
go into details later in section \ref{sec:convergence-problems} and
afterwards.

\subsection{Comparing Kernel Density
Models}\label{sec:kde-payoff-estimation} Given two kernel density estimates
by using their hyperreal representatives is easy, and made precise in the
following procedure (cf. \cite{Rass2015d}):

Let $\hat f_1,\hat f_2$ be two kernel density estimates
\eqref{eqn:kernel-density-estimator}, constructed from sample sets of sizes
$n_1, n_2$, with respective bandwidths $h_1, h_2$. Furthermore, let us assume
that the kernel has compact support, which is -- for example -- automatically
satisfied for the Epanechnikov kernel. Let the samples be in ascending order
(that can be arbitrary for ties), i.e., the densities $\hat f_1,\hat f_2$ are
constructed from the values $x_1\leq x_2\leq \cdots\leq x_{n_1}$ and $y_1\leq
y_2\leq \cdots\leq y_{n_2}$. To decide which one of $\hat f_1,\hat f_2$ is
preferred over the other, it suffices to look at the largest value in each
sample. Here, we distinguish the following cases:
\begin{itemize}
  \item If $x_{n_1}+h_1<y_{n_2}+h_2$, then $\hat f_1\prec \hat f_2$.
  \item If $y_{n_2}+h_2<x_{n_1}+h_1$, then $\hat f_2\prec \hat f_1$.
  \item Otherwise, we must have $x_{n_1}+h_1=y_{n_2}+h_2$, in which case we
      go on comparing as follows:
      \begin{itemize}
        \item If $x_{n_1}<y_{n_2}$, then $\hat f_1\prec \hat f_2$
        \item If $y_{n_2}<x_{n_1}$, then $\hat f_2\prec \hat f_1$
        \item Otherwise, we must have $x_{n_1}=y_{n_2}$, and therefore
            also $h_1=h_2$. In that case, we discard the pair
            $x_{n_1},y_{n_2}$ and repeat the comparison procedure on the
            next samples $x_{n_1-1}, y_{n_2-1}$ in decreasing order on
            both lists.
      \end{itemize}
\end{itemize}
The effort for sorting then makes the above procedure decide the
$\prec$-relation with complexity $O(n\log n)$, where $n=\max\set{n_1,n_2}$.

The correctness of this method is immediately evident upon the fact that any
two distinct points $z_i\neq z_j$ in a sequence of samples contribute to the
(respective) density using bandwidth $h$ on a range $[z-h,z+h]$, where
$z\in\set{z_i,z_j}$. A subtle issue arises in the case of categorial data
(say, if the outcome is rated as ``low'', ``medium'', ``high''), for which
identical samples may accumulate at identical positions. In this case, the
density with less samples in the higher range will be preferred. In the above
process, this will cause identical samples to be removed until either\ldots
\begin{itemize}
  \item \ldots exactly one of the distributions has no further samples at
      position $z$, which makes this preferred (since the other
      distribution assigns a nonzero likelihood to larger damage
      possibilities), or
  \item \ldots both distributions have an equal amount of probability mass
      on $z$, in which case the respective density functions are identical
      and the difference between the densities is zero.
\end{itemize}

In any case, this means that we can ultimately assume $z_i\neq z_j$, so that
the respective intervals are not congruent. W.l.o.g., assume $z_i<z_j$, then
the density is strictly positive on the range $[z_j-h,z_j+h]\setminus
[z_i-h,z_i+h]$, whereas it vanishes outside the interval. Now, apply this
reasoning to two densities, having their shapes on the right end of their
support being defined by the kernels centered around the maximum value in
either data sample. Depending on the bandwidth in each estimator and the
location of the interval, it remains to determine which density reaches
farther out and remains positive when the other density vanishes. The
arguments in the proof of \cite[lemma 2.4]{Rass2015b} exhibit that the
distribution whose support strictly overlaps the other that will have its
moments grow faster than the distribution that it is compared to. Hence, the
preference relation $\prec$ can be decided upon checking which density
estimate has the longer tail.

To improve flexibility in this regard, let us look at mixture distribution
models, which are distribution functions whose density is a mix of several
(simpler) density functions, i.e.,
\[
    f_{\text{mix}}(x)=\sum_{i=1}^n\omega_i f_i(x),\quad\text{where}\quad \sum_{i=1}^n\omega_i=1.
\]
From the mix of densities, the mixed distribution is instantly discovered to
be
\[
    F_\text{mix}(x) = \sum_{i=1}^n\omega_i F_i(x)
\]
Likewise, the moments of $X\sim F_{\text{mix}}$ satisfy
\begin{equation}\label{eqn:moments-of-mixes}
    \E{X^k}=\int_0^\infty x^kf_{\text{mix}}(x)dx=\sum_{i=1}^n\omega_i \E{X_i^k},\quad\text{where}\quad X_i\sim F_i
\end{equation}
From the last identity, we instantly obtain that for any two distributions
$F_1\preceq F_2$ and $G_1\preceq G_2$, preference holds for the mix in the
same way, i.e. (by induction), if $F_i\preceq G_i$ for $i=1,2,\ldots,n$, then
$F_{\text{mix}}\preceq G_{\text{mix}}$, with the mixtures defined as above.

Mixture distribution models are particularly handy as they can be shown to
approximate \emph{any} distribution up to arbitrary precision\todo{Zitiere
Bayesian Choice}.

Having mixtures as models, it is interesting to see what happens if the
mixture is such that components compare alternatingly. In other words, let
$F_1=\omega_1 G_1+\omega_2 G_2$ and $F_2 = \omega'_1 H_1 + \omega'_2 H_2$,
where $G_1\preceq H_1$ but $G_2\succeq H_2$.

Here, we can take advantage of the hyperreal representation of distributions
which allows us to do arithmetics ``as usual'', whereas it must strictly be
emphasized that the result of any such arithmetics no longer represents a
valid distribution (nor is the arithmetic done here in any way statistically
meaningful; it is a mere vehicle towards the sought conclusion).

More specifically, recall that any of the above distributions is represented
by a sequence of moments, and the comparison is based on which sequence grows
faster than the other.

This means that by virtue of \eqref{eqn:moments-of-mixes} and by letting the
sequences $(g_{1,n})_{n\in\N}$, $(g_{2,n})_{n\in\N}$, $(h_{1,n})_{n\in\N}$
and $(h_{2,n})_{n\in\N}$ represent the distributions $G_1,G_2,H_1$ and $H_2$,
we end up looking at the limit
\[
    a = \lim_{n\to\infty}\big[\underbrace{(\omega_1 g_{1,n}+\omega_2 g_{2,n})}_{\text{represents } F_1}-\underbrace{(\omega_1' h_{1,n}+\omega_2' h_{2,n})}_{\text{represents }F_2}\big]
\]
If $a\leq 0$ then $F_1\preceq F_2$, with strict preference if $a<0$.
Otherwise, we have $F_1\succeq F_2$, likewise with strict preference. More
detailed criteria call for additional hypotheses, such as the existence of
closed-form expressions for the involved moments. Many special cases,
however, are easy to decide, such as mixes of distributions with compact
support. In the case of finite mixtures, $F_{\text{mix}}$ is supported on
$\bigcup_{i=1}^n \supp(F_i)$, and has itself compact support. Then, all the
previously given criteria apply. Hence, the above limit expression comes into
play when comparing distributions with infinite support, and calls for a
full-fledged analysis only if approximations (by truncating the
distributions) are not reasonable.

\subsection{Comparing Deterministic to Random
Variables}\label{sec:det-vs-random} Let $a\in\R$ be a deterministic outcome
of an action, and let $Y\sim F$ be a random variable with non-degenerate
distribution supported on $\Omega=[0,b]$ whose density is $f$. We know that
$a$ yields a moment sequence $(a^k)_{k\in\N}$. A comparison to $Y$ is easy in
every of the three possible cases:
\begin{enumerate}
  \item If $a<b$, then $a\preceq Y$: to see this, choose $\eps<(b-a)/3$ so
      that $f$ is strictly positive on a compact set $[b-\eps,b-2\eps]$
      (note that such a set must exist as $f$ is continuous and the support
      ranges until $b$). We can lower-bound the $k$-th moment of $Y$ as
      \begin{align*}
        \int_0^b y^kf(y)dy &\geq \left(\inf_{[b-2\eps,b-\eps]} f\right)\cdot \int_{b-2\eps}^{b-\eps} y^kdy\\
        & = \frac
1{k+1}\left[(b-\eps)^{k+1}-(b-2\eps)^{k+1}\right].
      \end{align*}
      Note that the infimum is positive as $f$ is strictly positive on the
      compact set $[b-2\eps,b-\eps]$. The lower bound is essentially an
      exponential function to a base larger than $a$, since $b-2\eps>a$,
  and thus (ultimately) grows faster than $a^k$.
  \item If $a>b$, then $Y\preceq a$, since $Y$ -- in any possible
      realization -- leads to strictly less damage than $a$. The formal
      argument is now based on an upper bound to the moments, which can be
      derived as follows:
      \begin{align*}
      \int_0^b y^k f(y)dy & \leq (\sup_{[0,b]} f)\cdot \int_0^b y^kdy = (\sup_{[0,b]} f)\frac 1{k+1}b^{k+1}.
      \end{align*}
      It is easy to see that for $k\to\infty$, this function grows slower
      than $a^k$ as $a>b$, which leads to the comparison result.
  \item If $a=b$, then we apply the mean-value theorem to the integral
      occurring in $\E{Y^k}=\int_0^a y^kf(y)dy$ to obtain an $\xi\in[0,a]$
      for which
      \[
        \E{Y^k}=\xi^k\underbrace{\int_0^af(y)dy}_{=1}=\xi^k\leq a^k
      \]
      for all $k$. Hence, $Y\preceq a$ in that case. An intuitive
      explanation is obtained from the fact that $Y$ may assign positive
      likelihood to events with less damage as $a$, whereas a deterministic
      outcome is always larger or equal to anything that $Y$ can deliver.
\end{enumerate}

\subsection{Distributions with Infinite
Support}\label{sec:treating-infinite-supports} This case is difficult in
general, but simple special cases may occur, for example, if we compare a
distribution with compact support to one with infinite support (such as
extreme value distributions or ones with long or fat tails). Then, the
compactly supported distribution is always preferred, by the same argument as
used above (and in the proof of the invariance of $\preceq$ w.r.t. the
ultrafilter used to construct $^*\R$; see \cite[lemma 2.4]{Rass2015b}).

The unfortunate occasions are those where:
\begin{itemize}
  \item both distributions have infinite support, and
  \item neither \cite[Lemma 2.7]{Rass2015b} nor any direct criteria (such
      as eq. (13) in \cite{Rass2015b}) apply, and
  \item an approximation cannot be done (for whatever reason).
\end{itemize}
Then we have to work out the moment sequences explicitly. This situation is
indeed problematic, as without assuming bounded supports, we can neither
guarantee existence nor divergence of the two moment sequences.

Appropriate examples illustrating this problem can easily be constructed by
defining distributions with alternating moments from the representation by
the Taylor-series expansion of the characteristic function (see
\cite{Rass2015b} for an example).

As was explained in \cite{Rass2015b}, however, explained that extreme value
distributions strictly compare to one other depending depending on their
parameterizations, by applying lemma the criteria derived in
\cite{Rass2015b}. Mixes of such distributions can perhaps replace an
otherwise unhandy model.

\subsection{On Paradoxical Comparisons and Finding Good
Approximations}\label{sec:paradoxical-choices} In part one of this report,
finite approximations to distributions with infinite support were proposed to
fit these into our preference relation. It turns out, however, that even
distributions with finite support may lead to paradoxical and unexpected
results in terms of $\preceq$-preference. That is, we would in any case
prefer the distribution with smaller support, but this is not necessarily the
one giving us less damage in most of the cases.

To illustrate the problem, consider the two distributions plotted in figure
\ref{fig:quantile-based-approximations} (where the densities have been scaled
only to properly visualize the issue). Observe that $F_2$ assigns all its
mass around larger damage, while $F_1$ generally gives much lower damages
except for rare cases that exceed the events that can occur under $F_2$.
These rare cases, however, extend beyond the support of distribution $F_1$,
which based on the characterization by a sequence of would clearly let us
prefer $F_2$ over $F_1$ (Figure \ref{fig:paradoxical-preference}). Indeed, it
is easy to see that such a result is not what we would expect or want in
practice.

\begin{figure}[h!]
  \centering
  \subfloat[Distributions $F_2\preceq F_1$]{\label{fig:paradoxical-preference}\includegraphics[width=0.45\textwidth]{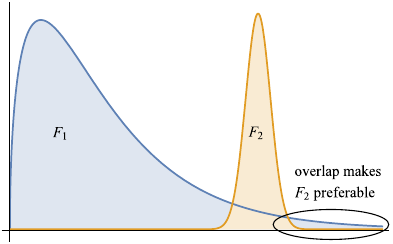}}\quad
  \subfloat[Truncated distributions $\hat F_1\preceq \hat F_2$]{\label{fig:corrected-preference}\includegraphics[width=0.45\textwidth]{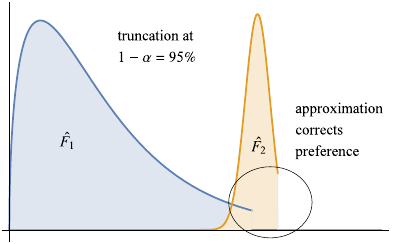}}\\
  \caption{Correcting Paradoxical Comparisons by Quantile-Based Approximations \cite{Rass2015d}}\label{fig:quantile-based-approximations}
\end{figure}

Truncating the distributions properly can, however, easily fix the issue. To
this end, let $\alpha>0$ be a \emph{chosen and fixed} threshold and cut off
each distribution as soon as it exceeds $1-\alpha$ of its mass. Precisely, we
could thus cut off a distribution at its $(1-\alpha)$-quantile, denoted as
$q_{1-\alpha}=F^{-1}(1-\alpha)$ (where $F^{-1}$ is the inverse of the
distribution function; called the \emph{quantile function}). Now, reconsider
the previous situation with $F_1$ and $F_2$, now being truncated properly
(Figure \ref{fig:corrected-preference}). Since both distributions' supports
extend only to the point when they have assigned 95\% of their mass, it is
instantly revealed that $F_2$ does so much earlier than $F_1$, hence making
$F_2$ clearly the preferred option here. The price of this fix is the
\emph{acceptance} of events that are less likely than $\alpha$ under $F_2$,
which can -- in a fraction of $\alpha$ among all possible cases -- give more
damage than $F_1$ could ever do.

In other words, common approximations based on a \emph{risk acceptance
threshold} $\alpha$, as described above, can avoid paradoxical preferences.
The assertion of theorem 2.14 in \cite{Rass2015b} would then fail in claiming
that extreme events are less likely under the preferred distribution. The
assertion, however, remains valid in a fraction of $(1-\alpha)\%$ of cases,
if we cut off the distributions.

\section{Computing Multi-Goal Security Strategies (MGSS)}\label{sec:computing-mgss}
Roughly speaking, the basic algorithm to compute MGSS in the sense of
\cite[Def.4.1]{Rass2015b} (generalizing precursor work in \cite{Rass2012}).
For the definition, let $\F\subseteq ^*\R$ be a subset of the hyperreals
$^*\R$ that represent probability distributions that are supported within a
compact subset of $[1,a]$ for some real number $a$. For two $n$-dimensional
vectors $\vec x=(x_1,\ldots,x_n),\vec y=(y_1,\ldots,y_n)$, we write $\vec
x\preceq_1 \vec y$, if an index $i$ exists for which $x_i\preceq y_i$
(irrespectively of what the other entries do).

\begin{defn}[Multi-Goal Security Strategy with Assurance]\label{def:MGSS}\hfill\\
A strategy $\vec p^*\in S_1$ in a two-person multi-criteria game with mixed
strategy spaces $S_1,S_2$. Let player 1 be a \emph{maximizer}, and have a
continuous payoff function $\vec u_1:S_1\times S_2\To\F^d$. A
\emph{Multi-Goal Security Strategy with Assurance} (MGSS) with
\emph{assurance} is a pair $(\vec v,\vec x^*)$, where $\vec
v=(V_1,\ldots,V_d)\in\F^d$ is vector of distributions and $\vec x^*\in S_1$,
if two criteria are met:
\begin{description}
  \item[Axiom 1: Assurance] The values in $\vec v$ are the component-wise
      guaranteed payoff for player 1, i.e. for all components $i$, we have
  \begin{equation}\label{eqn:ds-req0}
    V_i\preceq u_1^{(i)}(\vec p^*,\vec q)\qquad\forall\vec q\in S_2,
  \end{equation}
  with $\equiv$-equality being achieved by at least one choice $\vec q_i\in
  S_2$.
  \item[Axiom 2: Efficiency] At least one assurance becomes void if player
      1 deviates from $\vec p^*$ by playing $\vec p\neq \vec p^*$. In that
      case, some $\vec q_{\vec p}\in S_2$ exists (that depends on $\vec p$)
      such that
  \begin{equation}\label{eqn:ds-req2}
    \vec u_1(\vec p,\vec q_{\vec p})\preceq_1 \vec v.
  \end{equation}
\end{description}

\end{defn}

To practically compute MGSS, we actually ``simulate'' a gameplay in which
both players (honest defender and the full set of adversaries, each of which
corresponds to another security goal), record each others actions and
optimize their choices w.r.t. the empirical distributions\footnote{in the
literature, this is called a \emph{belief sequence}.}.

\subsection{The case of only one security
goal}\label{sec:one-dimensional-algorithms} We start with the simpler case in
which the MGSS is one-dimensional. This case boils down to an application of
regular fictitious play to a ``standard'' zero-sum game. The general case of
$d>1$ goals can easily be reduced to the one-dimensional case, as we will
show in section \ref{sec:general-mgss-fp}.

For the current matter, it suffices to decide upon one's best action, given
that the others choose their actions with known probabilities. This is
nothing else than a mix of distributions that we need to compare to our own
strategy (details of how this can be done have been discussed before).

To exactly specify a suitable fictitious play (FP) algorithm, we will
step-by-step transfer a MATLAB implementation of a regular FP algorithm as
described in \cite{Washburn2001} to our modified setting. Towards proving
correctness of the algorithm in our setting, the hyperreal numbers will turn
out useful.

\enlargethispage{\baselineskip} \lstinputlisting[caption=Fictitious play in
MATLAB for a zero-sum game with payoff matrix $\vec A\in\R^{n\times m}$ and
with a maximizing player 1/minimizing player 2,label=lst:matlab-fp]{fp.m}

Lines $6,\ldots, 13$ in the algorithm cover matters of initialization, where
we determine the size of the payoff structure ($n\times m$ being assumed
here), and initialize the solution vectors $\vec x$ and $\vec y$ to all zero.
Upon termination of the algorithm, $\vec x$ and $\vec y$ will approximate the
MGSS $\vec p^*$ and an optimal adversarial mixed strategy $\vec q^*$.

\begin{rem}
Note that our problem is to minimize player 1's loss, while player 2
(adversary) attempts to maximize the damage. This is essentially the opposite
of what is implemented in listing \ref{lst:matlab-fp}, so that our first
modification is to swap all calls to $\min$ and $\max$ in the entire
algorithm (lines $9\ldots 23$). The rest of the algorithm remains unchanged.
Nevertheless, we will continue the explanation letting player 1 be a
maximizer here, without loss of generality.
\end{rem}

Now for the for-loop (lines $14\ldots 23$): during the initialization phase,
the FP algorithm chooses an arbitrary initial pure strategy $1\leq i_1\leq n$
for player 1 (done in lines 9 and 10 of listing \ref{lst:matlab-fp}). From
then on, the players alternatingly choose their next pure strategies as a
best reply assuming that the other player selects at random from her/his
previous choices. That is, let the history of player 1's (row-player) moves
be $r_1,\ldots,r_k$ over the past $k$ iterations of FP, then player 2
(column-player) chooses his strategy $c_k$ so as to minimize the average
payoff
\begin{equation}\label{eqn:fp-average-payoff-1}
    c_k = \argmin_{j\in PS_2}\frac 1 k \sum_{\ell=1}^k a_{i_{\ell},j},
\end{equation}
when the payoff matrix is $\vec A = (a_{i,j})_{i,j=1}^{n,m}$. Let this choice
be $c_{k+1}$, then likewise, player 1 would in the next round choose
$r_{k+1}$ to maximize the average outcome
\begin{equation}\label{eqn:fp-average-payoff-2}
    r_{k+1}=\argmax_{i\in PS_1}\frac 1 k \sum_{\ell=1}^k a_{i,j_{\ell}}.
\end{equation}
These alternating choices are made in lines 15\ldots 17 and 19\ldots 21 in
the MATLAB code. Lines 16 and 20 simply count so-far played moves in a
vector, which, after normalization, will be the sought approximation to the
Nash-equilibrium $(\vec p^*,\vec q^*)$.

In light of the fact that our games pay the players in terms of entire
probability distributions rather than numeric values, line 17 and line 21
deserve a closer look: Indeed, the division by the iteration count in lines
18 and 22 is nothing else than the factor $1/k$ appearing in
\eqref{eqn:fp-average-payoff-1} and \eqref{eqn:fp-average-payoff-2}, so that
both expressions
\begin{equation}\label{eqn:fp-intermediate-mix}
\frac 1 k \sum_{\ell=1}^k F_{i_{\ell},j},\quad\text{and}\quad\frac 1 k \sum_{\ell=1}^k F_{i,j_{\ell}}
\end{equation}
yield valid distribution functions again, when we use our payoff distribution
matrix $\vec A=(F_{ij})$ instead of real values. As was previously noticed in
equation \eqref{eqn:moments-of-mixes} already, the corresponding hyperreal
representation of a weighted sum $\frac 1 k\sum_{\ell=1}^k F_{i_{\ell},j}$ is
identical to the correspondingly weighted sum of hyperreal representations of
each term (independently of the ultrafilter upon which the hyperreal space
$^*\R$ was constructed). In other words, working with distributions in
\eqref{eqn:fp-average-payoff-1} and \eqref{eqn:fp-average-payoff-2} amounts
to leaving the algorithm exactly unchanged, except for letting it operate on
hyperreal numbers instead of real numbers to represent the payoffs. By the
transfer principle \cite{Robinson1966}, convergence of the algorithm
identically holds in the space of hyperreals as it does in the real numbers
(since the respective statements are all in first-order logic\todo{genaue
Statements noch zitieren; eventuell den Satz von Losz}). It follows that the
resulting algorithm pointwise adds distribution or density functions in lines
17 and 21, and it divides by the iteration count as it would divide a
standard real number (only, it is a distribution function in our case). This
is permitted, since the pointwise scaling of a distribution function by a
real value amounts to scaling the respective hyperreal number (moment
sequence) element-wise by the same factor.

Convergence of the algorithm then implies that the approximations $\vec x$
and $\vec y$ will eventually approximate an equilibrium in the
$\preceq$-sense (asymptotically). Since the outcome of the algorithm over the
reals is an approximation to a Nash-equilibrium $(\vec x^*,\vec y^*)$ w.r.t.
the payoff being $\vec x^T\vec A\vec y$, the outcome of our algorithm is a
hyperreal pair $(\vec x^*,\vec y^*)$ that forms an equilibrium for the payoff
function $\vec x^T\vec A\vec y$, which, however, equals the sought overall
distribution of the damage $R$,
\begin{equation}\label{eqn:game-outcome-distribution}
    \Prob{R\leq r}=(F(\vec p,\vec q))(r)=\sum_{i,j}F_{ij}(r)\cdot p_i\cdot q_j,
\end{equation}
with independent choices by both players.

It is important to notice the difference of this outcome to that of standard
game-theory, where a ``repeated game'' means something essentially different
than here:
\begin{itemize}
  \item In standard game-theory, the players would choose their payoffs to
      maximize their long-run average over \emph{infinitely many
      repetitions} of the game.
  \item In our setting, as implied by
      \eqref{eqn:game-outcome-distribution}, the Nash-equilibrium optimizes
      the game's overall payoff distribution \emph{for a single round}.
      That is, this consideration is a single-shot, as opposed to
      conventional games, where we benefit only in the long-run from
      playing the equilibrium.
\end{itemize}
The whole trick is equation \eqref{eqn:game-outcome-distribution} resembling
a long-run average payoff for a conventional game, when the strategy choices
are made independently. Since our distributions boil down to mere hyperreal
\emph{numbers} in the space $\F\subset\R^\infty\slash\UF$, convergence there
is inherited from the convergence of regular FP over the reals. More
importantly, observe that the FP algorithm merely adds and scales
distributions (by a real valued factor), which is doable without any explicit
ultrafilter $\UF$. Thus, regular FP can be applied \emph{seemingly without
change} to our setting, except for the $\min$- and $\max$-functions being
computed w.r.t. $\preceq$-relation (and the criteria as specified above).
Alas, this impression will turn out to be practically flawed (example
\ref{exa:failure-of-fp}).

Let us now turn to the full pseudocode-specification of fictitious play in
our modified setting, which is algorithm \ref{alg:fp}. Note that this
algorithm is also adapted towards letting player 1 (the defender) be a
minimizer now.
\begin{algorithm}
\caption{Fictitious Play (with $\preceq$-minimizing first player)}\label{alg:fp}
\begin{algorithmic}[1]
\Require an $(n\times m)$-matrix $\vec A$ of payoff distributions $\vec
A=(F_{ij})$%
\Ensure an approximation $(\vec x,\vec y)$ of an equilibrium pair $(\vec
p^*,\vec q^*)$ and a two distributions $v_{low},v_{up}$ so that
$v_{low}\preceq F(\vec p^*,\vec q^*)\preceq v_{up}$. Here, $F(\vec
p^*,q^*)(r)=\Prob{R\leq r}=\sum_{i,j}F_{ij}(r)\cdot p_i^*q_j^*$.%
\State initialize $\vec x\gets \vec 0\in\R^n$, and $\vec y\gets \vec 0\in\R^m$%
\State $v_{low} \gets$ the $\preceq$-minimum over all
column-maxima%
\State $r \gets $the row index giving
$v_{low}$%
\State $v_{up} \gets$ the $\preceq$-maximum over all row-minima%
\State $c \gets$ the column index giving $v_{up}$%
\State $\vec u\gets (F_{1,c},\ldots,F_{n,c})$%
\State $y_{c}\gets y_{c} + 1$\Comment{$\vec y=(y_1,\ldots,y_m)$}%
\State $\vec v\gets \vec 0$\Comment{initialize $\vec v$ with $m$ functions that are zero everywhere}%
\For{$k=1,2,\ldots$}\label{lbl:fp-for-loop}%
    \State $u^*\gets$ the $\preceq$-minimum of $\vec u$%
    \State $r\gets$ the index of $u^*$ in $\vec u$%
    \State\label{lbl:vup-update} $v_{up} \gets$ the $\preceq$-maximum of $\set{u^*/k, v_{up}}$\Comment{pointwise scaling of the distribution $u^*$}%
    \State $\vec v\gets \vec
    v+(F_{r,1},\ldots,F_{r,m})$\Comment{pointwise addition of functions}%
    \State\label{lbl:fp-choice-counts} $x_{r}\gets x_{r}+1$\Comment{$\vec x=(x_1,\ldots,x_n)$}%
    \State\label{lbl:start-player2} $v_*\gets$ the $\preceq$-maximum of $\vec v$%
    \State $c\gets$ the index of $v_*$ in $\vec v$%
    \State\label{lbl:vlow-update} $v_{low}\gets$ the $\preceq$-minimum of $\set{v_*/k,
    v_{low}}$\Comment{pointwise scaling of the distribution $v_*$}%
    \State $\vec u\gets\vec u + (F_{1,c},\ldots,F_{n,c})$\Comment{pointwise addition of functions}%
    \State\label{lbl:end-player2} $y_{c}\gets y_{c}+1$\Comment{$\vec y=(y_1,\ldots,y_m)$}%
    \State\label{lbl:exit-criterion} exit the loop upon convergence\Comment{concrete
    condition given below}%
\EndFor%
\State\label{lbl:result-normalization} Normalize $\vec x, \vec y$ to unit
total sum\Comment{turn $\vec x,\vec
y$ into probability distributions.}%
\State\label{lbl:fp-finish}\Return{$\vec p^*\gets \vec x, \vec q^*\gets\vec
y,$ and $F(\vec p^*,\vec q^*)\gets \sum_{i,j}F_{ij}(r)\cdot x_i\cdot
y_j$}\Comment{$\approx (\vec p^*)^T\vec A\vec q^*$}
\end{algorithmic}
\end{algorithm}

Convergence of algorithm \ref{alg:fp} is inherited from the known convergence
results in the standard setting of fictitious play, as shown in listing
\ref{lst:matlab-fp}. Precisely, we have the following result, originally
proven by J.Robinson \cite{Robinson1951}. Our version here is an adapted
compilation of proposition 2.2 and proposition 2.3 in \cite[chapter
2]{Fudenberg1998}:
\begin{thm}\label{thm:standard-fp-convergence}
Under fictitious play in $^*\R$, the empirical distributions converge (along
an unboundedly growing sequence of hyperreal integers), if the game is
zero-sum.
\end{thm}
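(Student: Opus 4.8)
The plan is to leverage the transfer principle to reduce the hyperreal convergence claim to the classical Robinson theorem over the reals, which is already stated to be known (propositions 2.2 and 2.3 in \cite{Fudenberg1998}). The key observation, already developed in the text preceding the statement, is that algorithm \ref{alg:fp} performs only two kinds of operations on the payoff ``numbers'': pointwise addition of distributions and scaling by a real factor $1/k$. Under the hyperreal representation $\F\subset {}^*\R$, where each distribution is identified with its moment sequence modulo the ultrafilter $\UF$, these two operations coincide exactly with addition and scalar multiplication in ${}^*\R$. The $\min$ and $\max$ selections are taken with respect to $\preceq$, which by Definition \ref{def:preference} is precisely the order relation on ${}^*\R$ restricted to $\F$. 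Hence the entire execution of fictitious play is nothing but the ordinary fictitious-play recursion carried out in the ordered field ${}^*\R$ rather than in $\R$.

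First I would state the classical result explicitly: in an ordinary zero-sum game over $\R$, the empirical (belief) distributions generated by fictitious play converge, and the upper and lower value estimates $v_{up}, v_{low}$ squeeze the game value. Next I would formalize the claim that Robinson's convergence theorem is expressible as a first-order sentence $\varphi$ in the language of ordered fields (augmented by the finitely many matrix-entry constants $a_{ij}$ and the rational weights). The point is that the statement ``for every $\eps>0$ there is an $N$ such that for all $k\geq N$ the averaged payoffs lie within $\eps$ of the value'' quantifies only over field elements and integers and uses only $+,\cdot,\leq$ and the rational scalars $1/k$; it therefore lives in first-order logic over the real-closed field $\R$. By the transfer principle (equivalently, by {\L}o\'s's theorem for the ultrapower ${}^*\R=\R^\infty/\UF$), a first-order sentence holds in $\R$ if and only if it holds in ${}^*\R$. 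Applying transfer to $\varphi$ yields exactly the asserted convergence in ${}^*\R$, now along an internal (hence possibly unboundedly large hyperinteger) index sequence.

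The final step is to translate this back into the language of distributions. Because the payoff entries $F_{ij}$ are represented in $\F$ and the algorithm's arithmetic on them agrees termwise with moment-sequence arithmetic, the hyperreal convergence of $v_{up}$ and $v_{low}$ to a common hyperreal value $F(\vec p^*,\vec q^*)$ is precisely the statement that the empirical distributions converge in the $\preceq$-sense, as promised in the \texttt{Ensure} clause of algorithm \ref{alg:fp}. I would close by noting that normalization of $\vec x,\vec y$ to unit mass (line \ref{lbl:result-normalization}) is again a real-scalar operation and therefore commutes with transfer.

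The main obstacle I anticipate is not the transfer step itself but the bookkeeping needed to guarantee that Robinson's theorem really is a \emph{genuine} first-order sentence over the field language, without smuggling in second-order quantification. Two subtleties must be handled carefully. First, the convergence index $N$ is an \emph{integer}, and the averaging factor $1/k$ involves integer division, so one must phrase the sentence in a two-sorted or suitably coded way so that transfer produces internal hyperinteger sequences rather than merely a statement about finite $k$; this is exactly the origin of the qualifier ``along an unboundedly growing sequence of hyperreal integers'' in the theorem. Second, the $\argmin$/$\argmax$ tie-breaking and the $\preceq$-selection must be shown to be definable by first-order formulas in the ordered-field language restricted to $\F$ --- which holds because $\preceq$ is literally the induced order on ${}^*\R$, but this identification (Definition \ref{def:preference} versus the ultrapower order) is the one place where I would spend care to avoid a gap.
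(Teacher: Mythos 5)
Your proposal is correct and follows essentially the same route as the paper: the paper likewise treats the theorem as an adaptation of Robinson's classical result (via propositions 2.2 and 2.3 of Fudenberg--Levine) lifted to $^*\R$ by the transfer principle, observing that the algorithm only uses addition, real scaling, and the $\preceq$-order, all of which are the ultrapower operations on moment sequences. Your closing caveats about hyperinteger indices even anticipate the paper's own subsequent discussion of why the convergence threshold becomes an infinite hyperreal integer.
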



\begin{rem}
The zero-sum assumption in theorem \ref{thm:standard-fp-convergence} needs
some elaboration here: let $\vec A$ be the original payoff structure with all
distribution-valued entries. Let $\underline{\vec A}$ be its hyperreal
representation (by replacing each distribution by its representative moment
sequence). Then, a zero-sum game -- as considered in theorem
\ref{thm:standard-fp-convergence} -- would assign the payoffs
$(\underline{\vec A},-\underline{\vec A})$, in which player 2's structure no
longer corresponds to a valid matrix of probability distributions. However,
FP merely means the players choosing their moves as a \emph{best response} to
the so-far recorded history of strategies. For player 2, $\preceq$-maximizing
his payoff on $-\underline{\vec A}$ is the same as $\preceq$-minimizing the
payoff on $\underline{\vec A}$ by the properties of the ordering on the
hyperreals. Thus, his choices are indeed based on a valid probability
distribution matrix.
\end{rem}

We stress that theorem \ref{thm:standard-fp-convergence} talks about the
convergence of \emph{empirical distributions}, rather than the convergence of
the game's \emph{value} only (which is the somewhat weaker statement found in
\cite{Robinson1951}). 
The importance of this insight for us lies in the convergence criterion that
it delivers:
\begin{quote}
\underline{Convergence criterion for algorithm \ref{alg:fp}, line
\ref{lbl:exit-criterion}}: let $\vec x_k$ denote the empirical absolute
frequencies of strategy choices, as recorded by algorithm \ref{alg:fp} in
line \ref{lbl:fp-choice-counts}. Fix any $\eps>0$ and some vector-norm
$\norm{\cdot}$ on $\R^n$, and terminate the algorithm as soon as $\frac 1
k\norm{\vec x_{k+1}-\vec x_k}<\eps$.
\end{quote}
Alternative convergence criteria, say on the difference
$\abs{v_{up}-v_{low}}$ are also found in the literature, but are not
applicable to our setting here: to see why, recall that both, $v_{up}$ and
$v_{low}$ are probability distributions, and convergence
$\abs{v_{up}-v_{low}}\to 0$ holds in a hyperreal sense (by an application of
the transfer principle, or more generally, {\L}o\'{s}'s theorem
\cite{Bell1971}). Precisely, the statement would be the following: for every
hyperreal $\eps> 0$ there is a number $K$ such that
$\abs{v_{up}^{(k)}-v_{low}^{(k)}}<\eps$ for all hyperreal $k\geq K$, where
$v_{up}^{(k)},v_{low}^{(k)}$ denote the values in the $k$-th iteration of the
algorithm.

An inspection of Jane Robinson's original convergence proof
\cite{Robinson1951} shows that despite all statements transfer to the
hyperreal setting (consistently with {\L}o\'{s}'s theorem), convergence kicks
in beyond an integer limit that depends on the \emph{largest} entry in the
matrix. More specifically, given $\eps>0$ and letting $t\in\N$ be an
iteration counter, Robinson's proof (cf. lemma 4 in \cite{Robinson1951})
establishes at some point that $\max\vec v(t)-\min\vec u(t)<\eps$ conditional
on
\begin{equation}\label{eqn:convergence-threshold}
t\geq 8\cdot a\cdot t^*/\eps,
\end{equation}
based on an induction argument that delivers $t^*$ through the induction
hypothesis. The crucial point here is the number $a$, which is the absolute
value of the largest element in the payoff-matrix. Now, even if $t$ were (at
some point in the induction) a finite integer,
\eqref{eqn:convergence-threshold} asserts a hyperreal lower bound for $t$. By
our construction, $a$ represents a probability distribution with a diverging
moment-sequence (that is $a$) and thus an infinite hyperreal number.
Consequently, the number of iterations until convergence is a hyperreal and
infinite integer. In other words, convergence of algorithm \eqref{alg:fp},
when implemented in plain form over the hyperreals, cannot converge for a
standard integer for-loop (in line \ref{lbl:fp-for-loop}).

\subsection{Assuring Convergence}\label{sec:convergence-problems}

The problem can be made visible even on a less abstract level, by considering
an intermediate step in the algorithm that has established mixtures $u$ and
$v$ as multimodal distributions. Consider the intermediate values computed by
algorithm \ref{alg:fp}, using the following concrete example for
illustration.
\begin{exa}[Convergence failure of plain FP
implementations]\label{exa:failure-of-fp}
We construct a $2\times 2$-game with payoffs being Epanechnikov density
functions centered around the values
\[
    \vec A=\left(
      \begin{array}{cc}
        2 & 5 \\
        3 & 1 \\
      \end{array}
    \right)
\]
The payoff structure in our game is thus a $2\times 2$-matrix of functions,
\begin{center}
\begin{tabular}{|c|c|}
  \hline
  \includegraphics[scale=1]{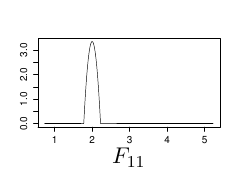} & \includegraphics[scale=1]{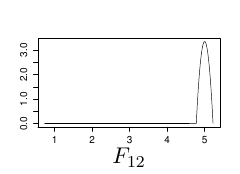} \\\hline
  \includegraphics[scale=1]{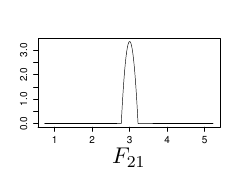} & \includegraphics[scale=1]{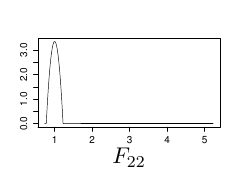}\\
  \hline
\end{tabular}
\end{center}

It is easy to compute the Nash-equilibrium for the exact matrix $\vec A$ as
$v(\vec A)=2.6$, returned by the mixed equilibrium strategies $\vec
p^*=(0.4,0.6)$ and $\vec q^*=(0.8,0.2)$ for both players. Since our setting
shall merely capture our uncertainty about the payoffs, we would thus
naturally expect a somewhat similar result when working on the payoff
\emph{distributions}. Unfortunately, however, algorithm \ref{alg:fp} will not
produce this expected answer, if it is implemented without care.

After a few iterations, the algorithm gets stuck in always choosing the first
row for player 1, since $v_{up}$ is always $\preceq$-preferable over $u^*/k$,
which is immediately obvious from plotting the two distributions:

\begin{center}
\begin{tabular}{ccc}
  \includegraphics[scale=1]{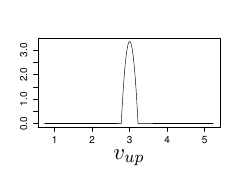} & \raisebox{1.5cm}{$\preceq$} & \includegraphics[scale=1]{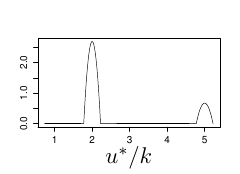} \\
\end{tabular}
\end{center}

Observe that choosing the upper row in the payoff structure adds probability
mass to lower damages, but leaves the tail of the distribution unchanged.
Thus, although the overall damage accumulates, this effect is not noticeable
by the $\preceq$-relation. Consequently, the algorithm will
counterintuitively come to the conclusion that $\vec p=(0,1)$ is a pure
equilibrium, which is definitely not meaningful.
\end{exa}

Let us now make the intuition developed in example \ref{exa:failure-of-fp}
more rigorous to find the solution for the problem.

Recall that the bounds are updated (lines \ref{lbl:vup-update} and
\ref{lbl:vlow-update} in algorithm \ref{alg:fp}) when $u^*/k$ or $v_*/k$ are
preferable over the current bounds. Now, given choices of actions whose
payoff distribution has vanishing support at the right end of the relevant
interval (choices $u^*$ and $v_*$), the pointwise scale and accumulation of
distributions will make $1/k\cdot u^*$ always larger than $v_{up}$, simply
because $v_{up}$ assigns no mass to regions where $u^*$ has a strictly
positive density. This strict positivity is clearly retained upon any scaling
by $1/k$, no matter how large $k$ is. Thus, the algorithm gets stuck with the
same action being chosen once and forever, since adding $\preceq$-smaller
payoff distributions to the long-run average $u^*/k$ will never make this
payoff less than $v_{up}$, and thus never cause an update to this value. The
analogous effect prevents updates to $v_{low}$, and the algorithm will end up
with pure equilibria and thus never converge to a practically meaningful
result.

A closer look at the example, however, also reveals the way out of the
dilemma: the algorithm hangs because the cumulative sum adds up probability
densities pointwise and the comparison is done at the tails of the
distribution only. Thus, any change outside the tail region is not noticed
when the $\preceq$-relation is evaluated, thus never triggering any updates
to $v_{up}$ and $v_{low}$. Obviously, the effect disappears if all
distributions in the payoff structure have full support over the interval of
interest, so that any change to the cumulative payoff is ``noticeable
everywhere'', especially at the tail region of the distribution.

This is easily accomplished by replacing the Epanechnikov-kernels by Gaussian
kernels in the estimator \eqref{eqn:kernel-density-estimator}, and truncating
the kernel density estimates $\hat f$ at a fixed (chosen) point; as
recommended already in section \ref{sec:paradoxical-choices}.

To establish the correctness of this fix, we must assure convergence of
fictitious play under these new payoff structures. To this end, we will
resort to the original version of fictitious play being done over $\R$. In
fact, we will construct a real-valued approximation of the hyperreal
representation of the payoff structure, so that we can work with standard FP
again. The transition from the hyperreal representation back to a real-valued
representation is approximative in the sense that we will represent a
hyperreal value/distribution $F$ by a real value $y(F)$, to be defined later,
in an $\preceq$-order-preserving fashion.

Let $\hat f$ be a Gaussian kernel density estimator for a payoff distribution
$F$ (more precisely, its density function), which is strictly positive on the
entire interval $\Omega=[0,a]$. Since all payoff densities are strictly
positive at $a$, we first observe that for any two densities $f_1,f_2$, we
have
\[
    \hat f_1(a)<\hat f_2(a)\text{ implies } \hat f_1\preceq\hat f_2.
\]
Otherwise, if $f_1(a)=f_2(a)$ (the $>$-case is analogous as before), then the
first order derivative may tip the scale in the sense that under equality of
the densities at $x=a$,
\[
\hat f'_1(a)>\hat f'_2(a)\text{ implies } \hat f_1\preceq\hat f_2.
\]
Upon equality of values and slopes, the second order derivative makes the
decision, since in that case,
\[
\hat f''_1(a)<\hat f''_2(a)\text{ implies } \hat f_1\preceq\hat f_2,
\]
and so on. All these three concrete cases are most easily visualized
graphically (see Figure \ref{fig:value-derivative-comparison} for the first
two cases), and our next step is making this so far intuitive criterion
rigorous. For that matter, recall that the \emph{lexicographical order} on
two (infinite) sequences $\vec a = (a_n)_{n\in\N}$ and $\vec b =
(b_n)_{n\in\N}$. This ordering is denoted as $<_{lex}$ and calls $\vec
a<_{lex} \vec b$ if $a_1<b_1$. If $a_1=b_1$, then $\vec a<_{lex} \vec b$ if
$a_2<b_2$, and so on. For strings (whether finite or not), the
lexicographical order is simply the alphabetical order. The lexicographical
less-or-equal $\leq_{lex}$ order is defined in the obvious way.

\begin{figure}
  \centering
  \subfloat[$f(a)<g(a)\Rightarrow f\preceq g$]{
  \includegraphics[width=0.45\textwidth]{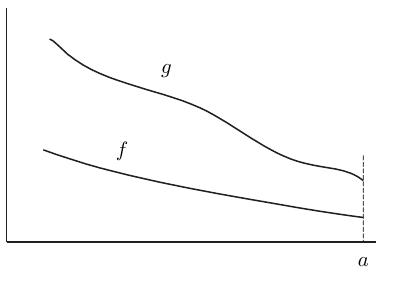}}
  \subfloat[$(f(a)=g(a)\land -f'(a)<-g'(a))\Rightarrow f\preceq g$]{
  \includegraphics[width=0.45\textwidth]{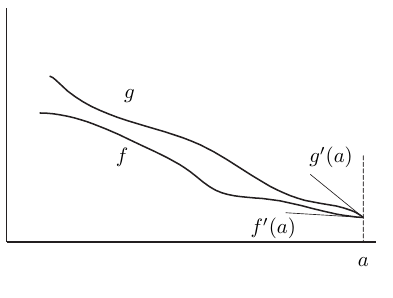}}
  \caption{Comparing tails using function values and derivatives}\label{fig:value-derivative-comparison}
\end{figure}

\begin{lem}\label{lem:derivative-comparisons}
Let $f,g\in C^\infty([0,a])$ for a real value $a>0$ be probability density
functions. If
\[
    ((-1)^k\cdot f^{(k)}(a))_{k\in\N}<_{lex} ((-1)^k\cdot g^{(k)}(a))_{k\in\N},
\]
then $f\preceq g$.
\end{lem}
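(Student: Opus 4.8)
The plan is to translate the statement into one about the moment integrals $m_f(k)=\int_0^a x^k f(x)\,dx$ and $m_g(k)=\int_0^a x^k g(x)\,dx$, and to show that the lexicographic hypothesis forces $m_f(k)\le m_g(k)$ for all sufficiently large $k$, which is precisely $f\preceq g$ in the sense of Definition~\ref{def:preference}. To this end I would set $h:=g-f\in C^\infty([0,a])$ and let $\ell\in\N$ be the first index at which the two compared sequences differ. The hypothesis then reads $h^{(j)}(a)=0$ for $0\le j<\ell$ and $(-1)^\ell h^{(\ell)}(a)>0$, and the goal reduces to proving that $\int_0^a x^k h(x)\,dx>0$ for all large $k$ (which in fact yields the strict relation $f\prec g$).

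The main tool is an exact asymptotic expansion of $\int_0^a x^k h(x)\,dx$ obtained by repeated integration by parts, integrating the factor $x^k$ and differentiating $h$. Every boundary contribution at $x=0$ vanishes (the power of $x$ is at least $k+1\ge 1$ while the derivatives of $h$ stay finite there), so for each $N$ one gets
\begin{equation*}
\int_0^a x^k h(x)\,dx=\sum_{j=0}^{N-1}(-1)^j\frac{a^{\,k+1+j}}{(k+1)(k+2)\cdots(k+1+j)}\,h^{(j)}(a)+R_N,
\end{equation*}
with remainder
\begin{equation*}
R_N=\frac{(-1)^N}{(k+1)\cdots(k+N)}\int_0^a x^{\,k+N}h^{(N)}(x)\,dx.
\end{equation*}
Because $h^{(j)}(a)=0$ for $j<\ell$, choosing $N=\ell+1$ collapses the sum to the single leading term
\begin{equation*}
T_\ell=(-1)^\ell\frac{a^{\,k+1+\ell}}{(k+1)\cdots(k+1+\ell)}\,h^{(\ell)}(a),
\end{equation*}
whose prefactor is strictly positive, so that $\operatorname{sgn}(T_\ell)=\operatorname{sgn}\big((-1)^\ell h^{(\ell)}(a)\big)>0$ by hypothesis.

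The decisive and hardest step is then to show that $R_{\ell+1}$ is negligible against $T_\ell$ as $k\to\infty$; this is where the $C^\infty$-assumption is used essentially. Since $h^{(\ell+1)}$ is continuous on the compact interval $[0,a]$, it is bounded by some $M<\infty$, whence $|R_{\ell+1}|\le M\,a^{\,k+\ell+2}/[(k+1)\cdots(k+\ell+1)(k+\ell+2)]$. Dividing by $|T_\ell|$ cancels the products of consecutive integers and all but one power of $a$, leaving $|R_{\ell+1}|/|T_\ell|\le M a/[(k+\ell+2)\,|h^{(\ell)}(a)|]\to 0$. Consequently $\int_0^a x^k h(x)\,dx=T_\ell\,(1+o(1))>0$ for all large $k$, i.e.\ $m_f(k)<m_g(k)$ eventually, which gives $f\prec g$ and in particular $f\preceq g$. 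I expect the only subtlety besides the remainder estimate to be the well-definedness of $\ell$: the strict lexicographic inequality guarantees that the two derivative sequences are not identical, so a finite first index of disagreement exists, and the argument adapts automatically to whichever derivative order first separates the two tails (the cases sketched informally before the lemma corresponding to $\ell=0,1,2$).
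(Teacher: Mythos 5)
Your proof is correct, but it takes a genuinely different route from the paper's. The paper never touches the moment integrals: it mirrors $f$ and $g$ about the vertical line $x=a$, runs an induction on the derivative order (integrating the $k$-th derivatives to step down to the $(k-1)$-st) to conclude that the lexicographic hypothesis forces $f(x)<g(x)$ on a left neighborhood $(a-\eps,a)$, and then appeals to the tail-domination principle of the precursor work (lemma 2.4 in \cite{Rass2015b}) --- whichever density stays strictly positive farther out has the faster-growing moments --- to convert this pointwise statement into $f\preceq g$. You instead verify Definition~\ref{def:preference} directly: repeated integration by parts gives an exact expansion of $\int_0^a x^k\bigl(g(x)-f(x)\bigr)\,dx$ whose leading term is governed by the first disagreeing derivative $(-1)^\ell h^{(\ell)}(a)>0$, and your remainder estimate $\abs{R_{\ell+1}}/\abs{T_\ell}\le Ma/\bigl[(k+\ell+2)\abs{h^{(\ell)}(a)}\bigr]\to 0$ shows the moment difference is eventually strictly positive. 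What your approach buys: it is self-contained (the paper's final step, passing from pointwise order near the tail to order of moment sequences, is only asserted as ``clear'' with a pointer to part one), it yields the strict conclusion $f\prec g$ together with an explicit asymptotic rate for the moment gap, and it isolates exactly where smoothness is used (boundedness of $h^{(\ell+1)}$ on the compact interval). What the paper's approach buys: geometric intuition matching Figure~\ref{fig:value-derivative-comparison}, and reuse of machinery already established for the $\preceq$-relation in the precursor paper.
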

Thus, it suffices to lexicographically compare the sequence of derivatives
with alternating sign-change to decide which distribution is the preferred
one.

\begin{proof}[of lemma \ref{lem:derivative-comparisons}]
The argument is based on our intuition above, and will look for which
distribution is below the other in a right neighborhood of $a$. To simplify
matters, however, let us ``mirror'' the functions around the vertical line at
$x=a$ and look for which of $f(x), g(x)$ grows faster when $x$ becomes larger
than $a$, using an induction argument on the derivative order $k$. Clearly,
whichever function grows slower for $x\geq a$ in the mirrored view is the
$\preceq$-preferable one. Furthermore, we may assume $a=0$ without loss of
generality (as this is only a shift along the horizontal line). For $k=0$, we
have $f(0)<g(0)$ clearly implying that $f\preceq g$, since the continuity
implies that the relation holds in an entire neighborhood $[0,\eps)$ for some
$\eps>0$. Thus, the induction start is accomplished.

For the induction step, assume that $f^{(i)}(0)=g^{(i)}(0)$ for all $i<k$,
$f^{(k)}(0)<g^{(k)}(0)$, and that there is some $\eps>0$ so that
$f^{(k)}(x)<g^{(k)}(x)$ is satisfied for all $0\leq x<\eps$. Take any such
$x$ and observe that
\begin{align*}
    0 &> \int_0^x\left(f^{(k)}(t)-g^{(k)}(t)\right)dt =f^{(k-1)}(x)-f^{(k-1)}(0)-\left[g^{(k-1)}(x)-g^{(k-1)}(0)\right]\\
    &=f^{(k-1)}(x)-g^{(k-1)}(x),
\end{align*}
since $f^{(k-1)}(0)=g^{(k-1)}(0)$ by the induction hypothesis. Thus,
$f^{(k-1)}(x)<g^{(k-1)}(x)$, and we can repeat the argument until $k=0$ to
conclude that $f(x)<g(x)$ for all $x\in[0,\eps)$.

For returning to the original problem, we must only revert our so-far
mirrored view by considering $f(-x), g(-x)$ in the above argument. The
derivatives accordingly change into $\frac {d^k}{dx^k}f(-x)=(-1)^k
f^{(k)}(x)$, and the proof is complete.
\end{proof}

Lemma \ref{lem:derivative-comparisons} has a useful corollary, which
establishes the sought order-preserving property of the new sequences.
\begin{cor}\label{cor:derivative-comparisons}
For $f\in C^{\infty}([0,a])$ being a probability distribution supported on
$[0,a]$, define $\vec y(f) := ((-1)^k f^{(k)}(a))_{n\in\N}$. Then, for any
two such distributions $f,g$, we have
\[
    f\preceq g\iff \vec y(f)\leq_{lex} \vec y(g).
\]
\end{cor}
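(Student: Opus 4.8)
The plan is to derive the corollary from Lemma~\ref{lem:derivative-comparisons} by exploiting that $\leq_{lex}$ is a \emph{total} order on sequences, so that for any two smooth densities $f,g$ exactly one of $\vec y(f)<_{lex}\vec y(g)$, $\vec y(f)=\vec y(g)$, or $\vec y(g)<_{lex}\vec y(f)$ holds. First I would record a sharpened reading of the lemma: its proof does not merely yield $f\preceq g$ but actually produces a right-neighbourhood of $a$ on which one density strictly dominates the other, so by the tail argument of \cite[lemma 2.4]{Rass2015b} the strict lexicographic inequality $\vec y(f)<_{lex}\vec y(g)$ in fact forces the \emph{strict} preference $f\prec g$ (and symmetrically $\vec y(g)<_{lex}\vec y(f)$ forces $g\prec f$). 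These two strict implications, together with the trichotomy, are the only ingredients needed for the non-degenerate cases.

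For the direction $\vec y(f)\leq_{lex}\vec y(g)\Rightarrow f\preceq g$ I would split along the trichotomy: if the lexicographic inequality is strict, the sharpened lemma gives $f\prec g$, hence $f\preceq g$; the borderline case $\vec y(f)=\vec y(g)$ is handled separately below. For the converse $f\preceq g\Rightarrow\vec y(f)\leq_{lex}\vec y(g)$ I would argue by contradiction: if $\vec y(f)\not\leq_{lex}\vec y(g)$, then by trichotomy $\vec y(g)<_{lex}\vec y(f)$, whence the sharpened lemma yields $g\prec f$. Writing $m_f(k)$ for the $k$-th moment of $f$ and unfolding \eqref{eqn:partial-ordering}, the assumption $f\preceq g$ provides some $K_1$ with $m_f(k)\leq m_g(k)$ for all $k\geq K_1$, while $g\prec f$ provides some $K_2$ with $m_g(k)<m_f(k)$ for all $k\geq K_2$; for $k\geq\max\{K_1,K_2\}$ these are contradictory. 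Hence $\vec y(f)\leq_{lex}\vec y(g)$, which finishes this implication and, with the first part, the biconditional.

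The one genuinely delicate point is the equality case $\vec y(f)=\vec y(g)$, where I must still conclude $f\preceq g$, i.e.\ in fact $f\equiv g$. Here $f$ and $g$ coincide with all their derivatives at the right endpoint $a$, i.e.\ they agree to infinite order at the top of the support. When the densities are (real\nobreakdash-)analytic near $a$ --- which is the case of interest, since the truncated finite Gaussian-kernel mixtures used throughout are analytic --- this forces $f=g$ identically and the conclusion is immediate. For a merely $C^\infty$ pair this is the main obstacle: equal derivatives at $a$ do not force equal functions, and the Laplace/Watson-type asymptotics of $\int_0^a x^k (f-g)(x)\,dx$ as $k\to\infty$ are governed exactly by the (now vanishing) derivative data at $a$, so one only learns that the moment difference is super-polynomially small relative to the leading $a^k$ growth, not that it is eventually zero. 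I would therefore close this case by invoking the analyticity that is actually present in our setting (equivalently, by restricting $\vec y$ to the truncated Gaussian KDE models), so that $\vec y(f)=\vec y(g)$ collapses to $f=g$ and hence $f\equiv g$; without analyticity one would need a finer tail estimate certifying that infinite-order contact at $a$ still yields $m_f(k)=m_g(k)$ for large $k$, or at least $f\equiv g$ in the $\preceq$-sense.
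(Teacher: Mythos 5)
Your proof follows essentially the same route as the paper's: split on the trichotomy of the lexicographic order, invoke Lemma~\ref{lem:derivative-comparisons} in both directions (using its strict form to derive a contradiction when $\vec y(g)<_{lex}\vec y(f)$ yet $f\preceq g$), and treat the equality case $\vec y(f)=\vec y(g)$ via the identical derivative data at $a$. Two refinements in your version deserve comment. First, you make explicit that the lemma actually yields \emph{strict} preference $f\prec g$ from strict lexicographic inequality; the paper's proof silently relies on this sharpened reading when it derives ``$g\prec f$, a contradiction'' from $\vec y(f)>_{lex}\vec y(g)$ (note that the non-strict conclusion $g\preceq f$ would give no contradiction at all, only $f\equiv g$), so your justification via the moment-tail argument and the incompatibility of $f\preceq g$ with $g\prec f$ closes a step the paper leaves implicit. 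Second, and more substantively, you are right that the equality case is the weak point: the paper asserts that identical derivatives at $a$ give identical Taylor expansions and ``therefore identical functions,'' which is valid only for analytic densities. For merely $C^\infty$ densities this fails --- one can perturb $f$ away from $a$ by a smooth, compactly supported, zero-mean function (or a flat bump such as $\exp(-1/(a-x)^2)$ cut off below $a$) without changing $\vec y(f)$, and then the eventual sign of $m_g(k)-m_f(k)$ is governed by the tail of the perturbation itself, so $f\equiv g$ can genuinely fail. The corollary as stated for all of $C^\infty([0,a])$ therefore needs the analyticity (equivalently, truncated-Gaussian-KDE) restriction you impose, and your proposal not only proves the statement in the setting where it is actually used, but identifies and repairs a real gap in the paper's own equality-case argument.
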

\begin{proof}
Let us consider the case when $\vec y(f)\neq \vec y(g)$ first. If so, then
$f\preceq g$ must necessarily make $\vec y(f)\leq_{lex} \vec y(g)$; for
otherwise, if $\vec y(f)>_{lex}\vec y(g)$, then lemma
\ref{lem:derivative-comparisons} would imply $g\prec f$, a contradiction.

Conversely, if $\vec y(f)<_{lex}\vec y(g)$, then lemma
\ref{lem:derivative-comparisons} implies $f\preceq g$ directly.

When $\vec y(f)=\vec y(g)$, then all derivatives are identical. This also
means that $f,g$ have identical Taylor-series expansions, and are therefore
identical functions. Thus, $f\equiv g$ consequently.
\end{proof}

The benefit of replacing an infinite sequence (a hyperreal number) by another
infinite sequence of derivatives lies in possibility to approximate the
density at its tail in a way that gets more and more accurate, depending on
how many terms we include in the sequence. That is, we can easily resort to a
Taylor-polynomial that approximates the tail-behavior of the density
functions up to any precision we like. Moreover, the Gaussian density has the
particular appeal of having all its distributions computable in closed form
expressions by Hermite-polynomials. The $n$-th such polynomial can be
constructed recursively as
\[
    H_{n+1}(x) := 2x H_n(x)-2nH_{n-1}(x), \quad\text{with }H_0(x)=1, H_1(x)=2x.
\]
These are related to the $n$-th order derivative of the Gaussian density via
the identity
\[
    (-1)^ne^{\frac{x^2}2}\frac{d^n}{dx^n}e^{-\frac{x^2}2}=2^{-\frac n 2}H_n\left(\frac x{\sqrt 2}\right).
\]
In \texttt{R}, the Hermite-polynomials can be computed by the
\texttt{orthopolynom} package \cite{Novomestky2013}.

\subsection{Numerical Aspects}\label{sec:numerics}
Now, to ultimately fix the convergence issue of FP over the hyperreals, we
apply corollary \ref{cor:derivative-comparisons} on the \emph{truncated}
series of moments, or equivalently, to a $t$-th order Taylor-polynomial
fitted to the kernel densities under investigation.

However, two more aspects deserve attention:
\begin{itemize}
  \item any floating point representation of the derivatives is inevitably
      subject of a rounding errors, say the machine precision $\eps_M>0$.
  \item the number of necessary terms in the Taylor-series expansion at
      $x=a$,
        \begin{equation}\label{eqn:taylor-approximation}
        \hat f(x) = \hat f(a) + \sum_{k=1}^\infty \frac{\hat f^{(k)}(a)}{k!}(x-a)^k
        \end{equation}
      may grow too large to be practically feasible (see figure
      \ref{fig:gauss-taylor} for an approximation that includes derivatives
      up to order 20 but is quite inaccurate near the origin $x=0$).
      However, given that we only need a good tail-approximation, the
      quality of the approximation far from the end of the support may be
      practically negligible. In other words, the result of the comparison
      remains somewhat insensitive against approximation errors far from
      the tail.

\begin{figure}
  \centering
  \includegraphics[scale=0.5]{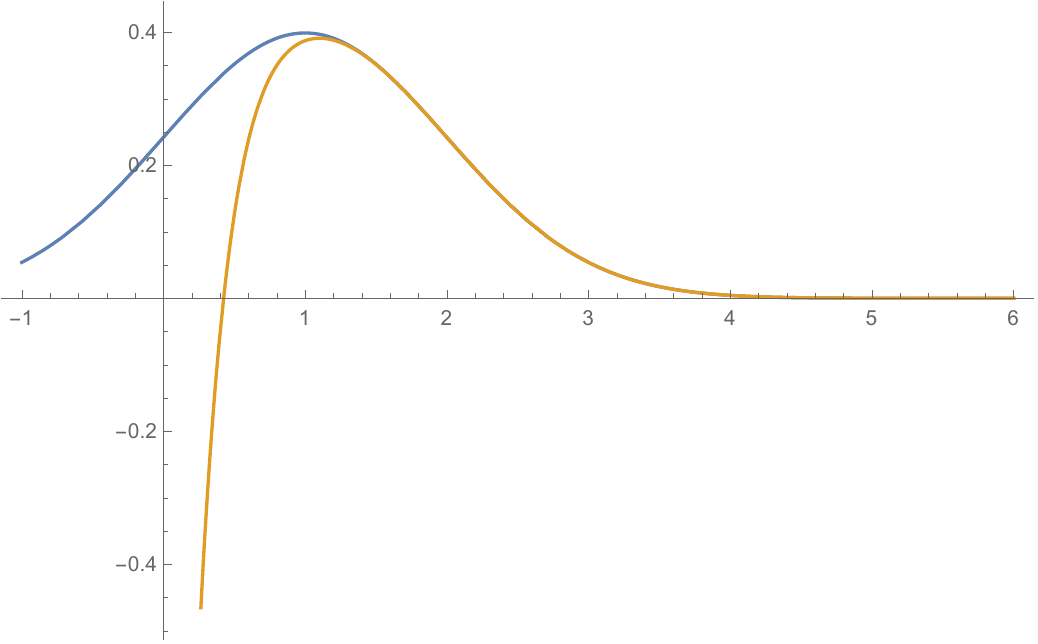}
  \caption{Approximating a $N(1,1)$-density at $x=4$ by a Taylor polynomial of order $20$}\label{fig:gauss-taylor}
\end{figure}

\end{itemize}
The roundoff error can directly be computed from
\eqref{eqn:taylor-approximation}, since the approximate version $\hat
f_{\eps}$ differs from $\hat f$ by at most
\[
    \max_{x\in[0,a]}\abs{\hat f(x)-\hat f_{\eps}(x)}\leq \eps_M + \sum_{k=1}^\infty \frac{\eps_M}{k!}a^k=\eps_M\cdot e^a,
\]
and can thus be kept under control by keeping the roundoff error small,
according to the (known) value $a$ (where all distributions are jointly
truncated).

Given that all derivatives can (numerically) be represented with $\ell$-bit
floating point words in a suitable excess representation\footnote{There will
be no need to implement such a representation in reality, as we require it
only for our theoretical arguments; however, an excess representation makes
the numerical order identical to the lexicographic order, provided that the
sign bits are chosen suitably (i.e., a 0 indicating a negative number, and 1
marking a positive number).}, we can cast the (now finite) sequence $\vec
y(f) := ((-1)^k f^{(k)}(a))_{n\in\N}$ from corollary
\ref{cor:derivative-comparisons} into a real number by concatenating the bit
representations of the numbers into a new value
\begin{equation}\label{eqn:floating-point-approximation}
    y(f) := f(a)\|-f'(a)\|f''(a)\|-f'''(a)\|\ldots\|(-1)^t f^{(t)}(a),
\end{equation}
where $t$ is the number of derivatives used, $\|$ denotes the bitstring
concatenation, and all entries are rounded floating point values (in excess
representation). The representation of a density function $f$ by a real
number of the form \eqref{eqn:floating-point-approximation} has its numerical
order (say between $y(f), y(g)$) being identical to the lexicographical order
on the respective sequences/strings (as the highest order bits determine the
order, and upon equality, the lower order bits come into play, etc.). Thus --
in theory -- we can safely use $y(f)$ as an approximate and order-preserving
replacement for the density function $f$.

More importantly, the representation \eqref{eqn:floating-point-approximation}
is somewhat ``retained'' for mixed distributions like
\eqref{eqn:fp-intermediate-mix}, as algorithm \ref{alg:fp} constructs: assume
that in \eqref{eqn:floating-point-approximation}, all entries $(-1)^i
f^{(i)}(a)$ range within $0\leq \abs{f^{(i)}(a)}\leq M$ for some constant
$M$. This constant must exist, as it is merely the maximum among all suprema
that the derivatives up to order $t$ attain on the (compact) interval $[0,a]$
(recall that our Gaussian kernel density estimator is continuously
differentiable). Since the convex combination \eqref{eqn:fp-intermediate-mix}
of distributions by linearity of the differentiation amounts to a convex
combination of the respective derivatives, the results must all stay within
the same numeric range $[-M,+M]$, thus adding up and averaging the payoffs
does not create a ``carry'' from the part $(-1)^i f^{(i)}(a)$ into the left
neighboring part $(-1)^{i-1} f^{(i-1)}(a)$ within the bitpattern of
expression \eqref{eqn:floating-point-approximation}. Thus, the representation
remains valid also for the mixes \eqref{eqn:fp-intermediate-mix} maintained
over the execution of FP.

Matters of implementation can be further simplified: Since the numeric order
of the number $y(f)$ given by \eqref{eqn:floating-point-approximation} is the
same as the lexicographic order on the truncated sequence $\widetilde{\vec
y}(f)=((-1)^i f^{(i)}(a))_{i=0}^t$, we can more conveniently work in practice
with the vector representation $\vec y(f)$, although the formal convergence
argument for the fictitious play algorithm is established in the standard
setting over the reals, a we can use $y(f)$ in theory.

In other words, we use the real value $y(f)$ and $\leq$-relation on the
payoffs in theory, while working with (the truncated) vector-version
$\widetilde{\vec y}(f)$ and $\leq_{lex}$ relation in practice, when running
FP. Convergence then follows from Robinson's results \cite{Robinson1951}.

Having assured convergence of fictitious play by switching back to numeric
approximate representations $y(f)$ of distribution functions $f$, it remains
to establish the \emph{correctness} of the approximate equilibrium returned
by the so-modified algorithm. This easily follows from what we have obtained
so far:

\begin{quote}
\emph{Informally}: A strategy profile is an ``approximate'' equilibrium in
the game with hyperreal (distributional) payoffs $F_{ij}$, if and only if
it is an equilibrium in the game over the reals, using $y(f_{ij})\in\R$ as
given by \eqref{eqn:floating-point-approximation} in its payoff structure.
\end{quote}

\noindent The following reasoning assumes a minimizing first player:

``$\Rightarrow$'': Let $\vec p^*=(\vec p_1^*,\ldots, \vec p_n^*)$ be a
Nash-equilibrium (in mixed strategies) in the $n$-person game with
distribution-valued payoff distribution functions $F_i(\vec p_i,\vec
p_{-i}^*)$. To ease notation, let us write $f_i(\vec p_i,\vec p_{-i}^*)$ to
mean the respective density function. Since $\vec p_i^*$ is an equilibrium
strategy for player $i$, we have $\forall i\forall \vec p: F_i(\vec p_i^*,
\vec p_{-i}^*)\preceq F_i(\vec p,\vec p_{-i}^*)$. Since both sides are
themselves distribution functions, corollary
\ref{cor:derivative-comparisons}, tells that their approximate
representatives satisfy $y(f_i(\vec p_i^*, \vec p_{-i}^*))\leq y(f_i(\vec
p,\vec p_{-i}^*))$.

``$\Leftarrow$'': Conversely, if $y(f_i(\vec p_i^*, \vec p_{-i}^*))\leq
y(f_i(\vec p,\vec p_{-i}))$, then we may distinguish two cases:
\begin{enumerate}
  \item An identity $y(f_i(\vec p_i^*, \vec p_{-i}^*))=y(f_i(\vec p,\vec
      p_{-i}^*))$ implies that the respective Taylor-poly\-nomial
      (approximations) are also identical, and thus the density $f_i(\vec
      p_i^*, \vec p_{-i}^*)$ is identical to the density $f_i(\vec p,\vec
      p_{-i}^*)$. Hence, $F_i(\vec p_i^*, \vec p_{-i}^*)\preceq F_i(\vec
p,\vec p_{-i}^*)$ in particular.
  \item Strict inequality $y(f_i(\vec p_i^*, \vec p_{-i}^*))< y(f_i(\vec
      p,\vec p_{-i}^*))$, however, implies $F_i(\vec p_i^*, \vec
      p_{-i}^*)\preceq F_i(\vec p,\vec p_{-i}^*)$ by corollary
      \ref{cor:derivative-comparisons}, as before.
\end{enumerate}

We may thus conclude that an approximate Nash-equilibrium can be computed by
fictitious play through a double approximation:
\begin{enumerate}
  \item We approximate the unknown (real) payoff distribution by a Gaussian
      kernel density estimator and truncate all the payoff estimators at
      the same value $a$.
  \item We fit a Taylor-polynomial of order $t$ at $x=a$ to enable the
      decision of the $\prec$-relation on the approximation by use the the
      vector of derivatives $(-1)^i\cdot f^{(i)}(a)$ for $i=0,1,\ldots, t$.
\end{enumerate}

\begin{rem}
We stress that the above arguments \emph{are not} a theorem so far, as the
quality of approximation of the resulting Nash equilibrium compared to the
theoretically exact result remains undetermined. Indeed, both, the kernel
density estimate and the Taylor polynomial get more and more accurate the
more points we sample (for the kernel density) or the more derivatives we
take into account (for the Taylor approximation). A precise error estimate,
however, is involved and beyond the scope of this current article (subject of
future investigations). Hence, we refrain from calling the last findings a
``theorem'' or similar here. Still, it must be emphasized that this double
approximation can be made arbitrarily accurate, where $\preceq$ is decided.
\end{rem}

\paragraph{Discrete Distributions}
To handle discrete distributions, we can ``abuse'' the kernel density
estimators in a straightforward fashion, as the effect of incorporating more
and more samples at few fixed positions leads to a set of Dirac-masses
accumulating at exactly the points of the support. Somewhat roughly speaking,
this causes no immediate harm, since the likelihood $\Pr(X\leq
k)=\sum_{i=0}^k\Pr(X=i)$ can quite well be approximated by integrating
Gaussian densities from $0$ to $k$, provided that the bell curve has its mass
centred slightly left of the integer point (say, half-spaced between the
integers), so that almost all the probability mass concentrates between two
integers (this is somewhat similar to an empirical histogram, with the
exception that it creates Dirac-masses instead of bars in the diagram).

The so-far sketched method thus remains applicable.

\subsection{Remaining Possible Convergence
Issues}\label{sec:alis-convergence-issues} Although fictitious play generally
will converge, it may not do so in reasonable time, as experimental findings
showed. Let us view the computation from a different angle: by Lemma
\ref{lem:derivative-comparisons} and also Theorem 3 in
\cite{rass_decisions_2016}, we can replace a loss distribution $F$ by some
(possibly infinite) vector $\vec f=(f_1,f_2, \ldots)$ so that $F\preceq G$ if
and only if $\vec f\leq_{lex} \vec g$. Let us in the following consider these
vectors as representatives of our loss distributions, and look at their
coordinates. Fictitious play, in the initial stage, is merely a game being
played on the first coordinate $f_1$ of the payoff structures until it
converges. By then, it breaks the tie and moves on to the second coordinate
$f_2$, to continue the game from there, further adapting the strategies for
both players. This, however, may invalidate the optimum found this far, thus
causing FP to return to the last coordinate ($f_1$) again to fix the issue.
This means that, given any accuracy by which FP would stop iterating, we
would end up with cycling between the first and the second coordinate, until
both have been optimized. Not until these two converged, we would move
further to the third coordinate in the lexicographic order, and so on. This
can practically mean that FP may still take an infeasibly large number of
iterations, unless the accuracy is set sufficiently coarse to accept ties
within a certain proximity of the actual optimum.

The view of lexicographic optimization as a stack of games, however, also
leads to an exact method to compute the optimum, using linear programming in
the next section.

\subsection{Exact Solution by Linear Programming}\label{sec:lp}
It is a standard exercise in game theory to convert the saddle point problem
\[
    \min_{\vec x}\max_{\vec y}\vec x^T\vec A\vec y
\]
with $\vec A\in\R^{n\times m}$ into the linear program
\begin{alignat}{2}
    \text{maximize}\quad & v \label{eqn:lp}\\
    \text{subject to}\quad  & v\leq\vec x^T\vec A\vec e_j &\quad\text{for all }j=1,2,\ldots,m;\nonumber\\
    & x_1+x_2+\ldots x_n =1;\nonumber\\
    & x_i \geq 0&\quad \text{for all }i=1,2,\ldots n.\nonumber
\end{alignat}
using $\vec e_i$ as the $i$-th unit vector in $\R^n$.

Now, taking into account that $\preceq$-optimization is actually a matter of
optimization respecting the lexicographic order, we actually have a stack of
games with payoff matrices $\vec A_1,\vec A_2,\vec A_3,\ldots$, in which
$\vec A_i$ is the matrix game using only the $i$-th coordinate in the
vector-representation of the payoff distributions. For the example of Lemma
\ref{cor:derivative-comparisons}, we would have $\vec A_k$ with entries
$(-1)^k\cdot f^{(k)}_{ij}(a)$ for the strategy pair $(i,j)\in PS_1\times
PS_2$; in the discrete case, the entries would directly be the probability
masses for the respective category.

Like for FP, we can then exactly compute an MGSS by solving \eqref{eqn:lp} on
the game $\vec A_1$ containing only the last coordinates in the payoff
vectors. This delivers the saddle-point value $v_1$, and is the same is if FP
would have taken infinitely many iterations to convergence.

Now, to avoid the tie breaking issue described in Section
\ref{sec:alis-convergence-issues}, we move on to optimize the next
coordinate, but \emph{preserving} what we have achieved so far. That is, the
next linear program will use the matrix $\vec A_2$, but demand any feasible
solution $\vec x$, and hence also the optimum, to still satisfy the further
constraint
\begin{equation}\label{eqn:constraint}
    \vec x^T\cdot \vec A_1 \leq v_1,
\end{equation}
where the $\leq$ is w.r.t. to all rows in the vector-matrix product. This
prevents invalidating our previous optimum if we go on minimizing the mass on
the butlast loss category, or second coordinate in the vector representation.

From that point onwards, the process continues likewise, i.e., if the payoff
distributions $F_{ij}$ are represented by vectors $\vec f_{ij}\in\R^k$, we
repeat the following steps, starting from $i=1$:
\begin{enumerate}
  \item Construct the payoff structure $\vec A_i$ by putting the $i$-th
      coordinate of $\vec f_{ij}$ into the $ij$-position of the matrix
      $\vec A_i$.
  \item Construct the linear program \eqref{eqn:lp} using $\vec A_i$.

        If $i>1$, add the constraints $\vec x\cdot \vec A_j\leq v_j$ for
  $j=1,\ldots,i-1$ to \eqref{eqn:lp}.

  \item Solve the resulting optimization problem to obtain $v_i$. Put
      $i\gets i+1$ and return to step 1 until all coordinates have been
      processed.
\end{enumerate}
The vector $\vec x^*$ coming out of the final optimization is the sought
multi-goal security strategy. The same procedure, only with the respective
dual linear programs with the respectively changed constraint
\eqref{eqn:constraint} applies for each (hypothetical) opponent to lex-order
optimize its own goal in the zero-sum game against the defending player 0.

\subsection{Implementation Hints for Algorithm \ref{alg:fp} --
Overview}\label{sec:generalized-fp-overview} As we noticed above, the
implementation of algorithm \ref{alg:fp} \emph{cannot} be done plainly and
using the decision method sketched in section
\ref{sec:distribution-comparisons}. Instead, the following adaptations
require a careful implementation:
\begin{enumerate}
  \item Use Gaussian kernel density estimators to approximate the actual
      payoff distribution functions.
  \item Fix a common cutoff point $a$ and truncate all distributions in the
      entire payoff matrix at the value $a$.
  \item Fix an order $t$ and compute Taylor-polynomial approximations for
      all payoff distributions at the point $x=a$. To represent an
      (approximate) distribution in the payoff matrix, use the vector of
      derivatives with alternating signs (as defined before).
  \item Implement algorithm \ref{alg:fp} as it stands, using corollary
      \ref{cor:derivative-comparisons} to perform the $\preceq$-comparisons
      (on the approximate representatives of the true payoff structure).
      The pointwise additions of densities herein can be applied just as
      they are to the vectors of derivatives instead (thanks to the
      linearity of the differentiation operators).
\end{enumerate}

\subsection{The case of $d>1$ security goals}\label{sec:general-mgss-fp}
Our presentation here is directly adapted from \cite{Rass2014}: given a
two-player multi-objective $\Gamma$ and its auxiliary game
$\overline{\Gamma}$ as specified in \cite{Rass2015b}, we first need to cast
$\overline{\Gamma}$ into zero-sum form to make FP converge. To this end,
recall that player $1$ in $\Gamma$, who is player $0$ in $\overline{\Gamma}$,
has $d$ goals to optimize, each of which is represented as another opponent
in the auxiliary game $\overline{\Gamma}$. We define the payoffs in a
\emph{compound game} (``one-against-all'', where the ``one'' is the defender,
against ``all'' (hypothetical) adversaries) from the payoffs in
$\overline{\Gamma}$, while making the scalar payoffs vector-valued to achieve
the zero-sum property:
\begin{itemize}
  \item player 0:
  \begin{eqnarray*}
  &&\overline{\vec u}_0: PS_1\times \prod_{i=1}^d PS_2\To\R^d,\\
      &&\vec u_0(s_0,\ldots,s_d):=(u_1^{(1)}(s_0,s_1),u_1^{(2)}(s_0,s_2),\ldots,u_1^{(d)}(s_0,s_d))
  \end{eqnarray*}
  \item $i$-th opponent for $i=1,2,\ldots,d$:
  \begin{equation}\label{eqn:opponent-payoffs}
    \overline{\vec u}_i = (0,0,\ldots,0,-u_1^{(i)},0,\ldots,0).
  \end{equation}
\end{itemize}
Obviously, the ``vectorization'' of the opponents payoffs leaves the set of
equilibria unchanged.

\subsubsection{Prioritization of Security Goals}
To numerically compute (one of) the MGSS, \cite{Lozovanu2005} prescribes to
scalarize the multiobjective game into a single-objective game. Indeed, this
scalarization is primarily for technical reasons and can be made arbitrary
(under suitable constraints). However, it this degree of freedom in the
choice on the scalarization allows us to \emph{prioritize} the security goals
of player 0: To each of his $d$ goals, we assign a \emph{weight}
$\alpha_{01},\ldots,\alpha_{0d}>0$ subject to the condition that
$\sum_{i=1}^d \alpha_{0i}=1$.

With these weights, the payoffs after scalarization are:
\begin{itemize}
  \item for player $0$: $f_0 = \alpha_{01}\overline{\vec u}_1 +
      \alpha_{02}\overline{\vec u}_2 + \cdots + \alpha_{0d}\overline{\vec
      u}_d$,
  \item for the $i$-th opponent, where $i=1,2,\ldots,d$
    \begin{eqnarray}
          f_i &=& \alpha_{01}\cdot
      0+\alpha_{02}\cdot 0+\cdots + \alpha_{0,i-1}\cdot 0 +
      \alpha_{0i}\cdot (-u_1^{(i)}) + \alpha_{0,i+1}\cdot 0 + \alpha_{0d}\cdot 0\nonumber\\
      &=&-\alpha_{0i}\cdot u_1^{(i)}\label{eqn:opponent-payoffs-scalarized}
    \end{eqnarray}
\end{itemize}
Concluding the transformation, we obtain a scalar compound game
\begin{equation}\label{eqn:scalarized-compound-game}
\overline{\Gamma}_{sc} = (\set{0,1,\ldots,d}, \{PS_1,\underbrace{PS_2,\ldots,PS_2}_{d\mbox{ \scriptsize{times}}}\},
\set{f_0,\ldots, f_d})
\end{equation}
from the original two-person multiobjective $\Gamma$ with payoffs
$u_1^{(1)},\ldots,u_1^{(d)}$ that can directly be plugged into expressions
\eqref{eqn:opponent-payoffs} and \eqref{eqn:opponent-payoffs-scalarized}.

Towards a numerical computation of equilibria in $\overline{\Gamma}_{sc}$, we
need yet another transformation due to \cite{Sela1999}: for the moment, let
us consider a general compound game $\Gamma_c$ as a collection of $d$
two-person games $\Gamma_c^{(1)},\ldots,\Gamma_c^{(d)}$, each of which is
played independently between player 0 and one of its $d$ opponents. With
$\Gamma_c$, we associate a two-person game $\Gamma_{cr}$ that we call the
\emph{reduced game}. The strategy sets and payoffs of player 0 in
$\Gamma_{cr}$ are the same as in $\Gamma_c$. Player 2's payoff in the reduced
game is given as the \emph{sum} of payoffs of all opponents of player 0 in
the compound game $\Gamma_c$.

\begin{lem}[\cite{Sela1999}]
A fictitious play process approaches equilibrium in a compound game
$\Gamma_c$, if and only if it approaches equilibrium in its reduced game
$\Gamma_{cr}$.
\end{lem}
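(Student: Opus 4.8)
The plan is to turn the additive structure of the compound payoffs into an exact correspondence between fictitious-play (FP) trajectories in $\Gamma_c$ and in $\Gamma_{cr}$, and between the two equilibrium sets, so that the property of ``approaching equilibrium'' transfers verbatim in both directions. Throughout, write $A_1,\ldots,A_d$ for the payoff matrices of the subgames $\Gamma_c^{(1)},\ldots,\Gamma_c^{(d)}$; the positive scalarization weights $\alpha_{0i}$ may be absorbed into the $A_i$, so I set them to $1$. In both games player $0$'s payoff is the separable form $\sum_{i=1}^d x^{T} A_i y_i$, opponent $i$ in $\Gamma_c$ receives $-x^{T} A_i y_i$, whereas the single (``combined'') opponent in $\Gamma_{cr}$ controls the whole tuple $(y_1,\ldots,y_d)$ and receives $-\sum_{i=1}^d x^{T} A_i y_i$.

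First I would record two elementary facts that make the dynamics decouple. (i) Given empirical opponent play with marginals $\bar y_1,\ldots,\bar y_d$, player $0$'s best-response set is $\argmax_{x} x^{T}\big(\sum_{i} A_i \bar y_i\big)$ in \emph{both} games, because player $0$'s payoff is linear and therefore sees the combined opponent only through its marginals. (ii) Given player $0$'s empirical distribution $\bar x$, the combined opponent's best-response set factorizes, $\argmax_{(y_1,\ldots,y_d)}\big(-\sum_i \bar x^{T} A_i y_i\big)=\prod_{i}\argmin_{y_i}\bar x^{T} A_i y_i$, which is precisely the product of the individual opponents' best-response sets in $\Gamma_c$.

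Next I would set up an induction on the step counter $t$. Initialize the two processes consistently (the same initial move for player $0$, and for the combined opponent the tuple of the individual opponents' initial moves) under a fixed tie-breaking rule. Using (i) and (ii), player $0$'s move at step $t{+}1$ is identical in both runs, and the combined opponent's move is the tuple of the individual opponents' moves; hence player $0$'s empirical distribution is identical, opponent $i$'s empirical distribution in $\Gamma_c$ equals the $i$-th marginal of the combined opponent's empirical distribution in $\Gamma_{cr}$, and the realized payoff streams coincide. This yields a bijection between FP trajectories that preserves every empirical marginal. For the equilibria, linearity again implies that the Nash conditions of $\Gamma_{cr}$ bind the combined opponent only through its marginals, and the separability in (ii) shows that $(x^{*},z^{*})$ is an equilibrium of $\Gamma_{cr}$ if and only if its marginals $(x^{*},\bar y_1^{*},\ldots,\bar y_d^{*})$ form an equilibrium of $\Gamma_c$. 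Combining the trajectory bijection with this equilibrium correspondence, the empirical profile of a run approaches the equilibrium set of $\Gamma_c$ exactly when the corresponding run approaches that of $\Gamma_{cr}$, which is the assertion (and, as a bonus, $\Gamma_{cr}$ is two-person zero-sum, so Robinson's theorem then certifies that this common limit actually exists).

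\textbf{Main obstacle.} The delicate point is that the combined opponent's empirical distribution lives in the enlarged product space $\prod_i PS_2$ and may carry correlations with no counterpart in $\Gamma_c$; the argument stands or falls on showing rigorously that both the FP dynamics and the equilibrium condition perceive this distribution \emph{only} through its marginals, so that such correlations are genuinely invisible. A secondary nuisance is the handling of ties in the best-response correspondences: to make the trajectory correspondence an honest bijection one must either fix a deterministic tie-breaking rule that is compatible across the two games, or phrase the whole correspondence at the level of best-response \emph{sets} and argue set-equality rather than equality of single trajectories.
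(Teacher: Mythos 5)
You should know at the outset that the paper contains no proof of this lemma: it is imported as a black box from Sela (1999), so there is no internal argument to compare yours against, and what follows judges your proposal on its own merits (it is, in essence, a reconstruction of Sela's original argument). Your two pillars are correct: player 0's payoff is linear and therefore perceives the combined opponent's possibly correlated empirical distribution only through its marginals, and the combined opponent's pure best-response set factorizes into the product of the individual opponents' best-response sets. Your ``main obstacle'' dissolves along exactly these lines, and it is worth recording why: the combined opponent's payoff also depends on its mixed strategy $z$ only through the marginals, and every tuple of marginals is achievable (e.g.\ by the product distribution), so $z^*$ is a best reply in $\Gamma_{cr}$ iff each of its marginals is a best reply in the corresponding subgame; equivalently $\supp(z^*)\subseteq \prod_i \mathrm{BR}_i(x^*)$, because the support of a joint distribution is contained in the product of the supports of its marginals. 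Consequently the equilibrium set of $\Gamma_{cr}$ is exactly the preimage of the equilibrium set of $\Gamma_c$ under the continuous marginal map, and compactness of the strategy simplices converts this into the two-way statement that distance to one equilibrium set vanishes iff distance to the other does. That compactness/accumulation-point step is the one piece you assert but do not carry out, and it is genuinely needed: best-response correspondences are only upper hemicontinuous, so a pointwise ``nearby equilibrium'' argument would fail. Your worry about ties is best resolved by the set-based phrasing you mention as the alternative: a sequence of moves is an FP process in $\Gamma_c$ iff the associated tuple-valued sequence is an FP process in $\Gamma_{cr}$, so no tie-breaking convention is required at all. A final caveat specific to this paper: the lemma is invoked for the scalarized compound game whose payoffs are distribution-valued, so your real-valued matrix argument applies only after the paper's reduction of payoffs to real representatives $y(f)$ (or after a transfer-principle argument over $^*\R$), which is consistent with how the paper itself handles Robinson's convergence theorem.
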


So, it suffices to consider the reduced game $\overline{\Gamma}_{scr}$
belonging to $\overline{\Gamma}_{sc}$. It is a trivial matter to verify the
following fact (by substitution).
\begin{lem} The reduced game $\overline{\Gamma}_{scr}$ of the scalarized compound game
$\overline{\Gamma}_{sc}$ defined by \eqref{eqn:scalarized-compound-game} is
zero-sum.
\end{lem}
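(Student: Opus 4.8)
The plan is to compute the two payoff functions of the reduced game $\overline{\Gamma}_{scr}$ explicitly and to verify that they are exact negatives of one another, which is the definition of a two-person zero-sum game. By the description of the reduced game recalled just above, player $0$ retains precisely the payoff it has in the scalarized compound game $\overline{\Gamma}_{sc}$, namely the $\alpha_0$-weighted sum $f_0 = \sum_{i=1}^d \alpha_{0i}\, u_1^{(i)}$ of the components of $\overline{\vec u}_0$, while the single remaining opponent (player $2$) is assigned the \emph{sum} of the scalarized payoffs $f_1,\ldots,f_d$ of all $d$ opponents of the compound game. Hence everything reduces to evaluating $\sum_{i=1}^d f_i$ and comparing it with $-f_0$.

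First I would substitute the scalarized opponent payoffs from \eqref{eqn:opponent-payoffs-scalarized}, which read $f_i = -\alpha_{0i}\, u_1^{(i)}$ for each $i=1,\ldots,d$. Summing over $i$ then gives
\[
  \sum_{i=1}^d f_i = \sum_{i=1}^d \bigl(-\alpha_{0i}\, u_1^{(i)}\bigr) = -\sum_{i=1}^d \alpha_{0i}\, u_1^{(i)} = -f_0,
\]
so that the combined opponent's payoff in $\overline{\Gamma}_{scr}$ is exactly $-f_0$. Adding the two reduced-game payoffs yields $f_0 + \sum_{i=1}^d f_i = f_0 - f_0 = 0$ identically on the common strategy space, which is precisely the asserted zero-sum property.

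Since the computation is a one-line substitution, there is no genuine obstacle here; the only point demanding care is the index bookkeeping that makes the cancellation exact. It works out because one and the same weight vector $(\alpha_{01},\ldots,\alpha_{0d})$ is used to scalarize both player $0$'s vector payoff $\overline{\vec u}_0$ and each opponent's vector payoff $\overline{\vec u}_i$, and because each $\overline{\vec u}_i$ in \eqref{eqn:opponent-payoffs} carries its single nonzero entry $-u_1^{(i)}$ in coordinate $i$. It is precisely this alignment — the $i$-th opponent contributing $-\alpha_{0i}\, u_1^{(i)}$ against player $0$'s $i$-th weighted term $+\alpha_{0i}\, u_1^{(i)}$ — that makes the sum telescope to $-f_0$; had the opponents been scalarized with a different weight vector, the two reduced payoffs would fail to be exact negatives. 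Finally, I would note that no property of the underlying (distribution-valued) payoffs $u_1^{(i)}$ is used beyond linearity of the weighted sums, so the argument is purely formal and carries over verbatim to the hyperreal setting over $^*\R$.
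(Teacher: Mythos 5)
Your proof is correct and is exactly the substitution argument the paper itself invokes (the paper merely states the lemma is "a trivial matter to verify \ldots by substitution"): player $0$ keeps $f_0=\sum_{i=1}^d\alpha_{0i}u_1^{(i)}$, the merged opponent receives $\sum_{i=1}^d f_i=-\sum_{i=1}^d\alpha_{0i}u_1^{(i)}=-f_0$, and the payoffs cancel identically. Your added observation that the cancellation hinges on using the \emph{same} weight vector $(\alpha_{01},\ldots,\alpha_{0d})$ for player $0$ and for the opponents is a correct and worthwhile clarification, but it does not change the route.
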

So by the convergence of FP in any zero-sum game (theorem
\ref{thm:standard-fp-convergence}), we obtain the final correctness result on
FP when being applied to our $(d+1)$-person game $\overline{\Gamma}_{sc}$
that will deliver the sought MGSS:

\begin{thm}\label{thm:MGSS-FP} The scalarized compound game
$\overline{\Gamma}_{sc}$ defined by \eqref{eqn:scalarized-compound-game} has
the fictitious play property.
\end{thm}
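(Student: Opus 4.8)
The plan is to prove the statement purely by chaining the two preceding lemmas with Theorem \ref{thm:standard-fp-convergence}, since all the analytic work has already been isolated there. Recall that a game is said to have the \emph{fictitious play property} when every fictitious play process converges to the set of equilibria; so the goal reduces to exhibiting such convergence for $\overline{\Gamma}_{sc}$.

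First I would verify that $\overline{\Gamma}_{sc}$ genuinely fits the definition of a \emph{compound game} in the sense required by Sela's lemma. This is a structural check on the scalarized payoffs \eqref{eqn:opponent-payoffs-scalarized}: the payoff $f_i=-\alpha_{0i}\cdot u_1^{(i)}$ of the $i$-th opponent depends only on the strategy $s_0$ of player $0$ and the strategy $s_i$ of that single opponent, while player $0$'s payoff $f_0=\sum_{i=1}^d\alpha_{0i}u_1^{(i)}$ decomposes as the sum, over $i$, of its payoff in the two-person game $\Gamma_c^{(i)}$ played against opponent $i$. Hence $\overline{\Gamma}_{sc}$ is exactly a collection of $d$ two-person games played independently between player $0$ and each opponent, as Sela's framework demands.

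With the structural hypothesis in place, the chain closes quickly. By Sela's lemma, a fictitious play process approaches equilibrium in $\overline{\Gamma}_{sc}$ if and only if it approaches equilibrium in the reduced game $\overline{\Gamma}_{scr}$. By the second lemma, $\overline{\Gamma}_{scr}$ is zero-sum. Theorem \ref{thm:standard-fp-convergence} then guarantees that fictitious play converges in every zero-sum game, so it converges in $\overline{\Gamma}_{scr}$; pulling this back through Sela's lemma yields convergence in $\overline{\Gamma}_{sc}$, which is precisely the fictitious play property.

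The one point demanding care --- and the step I expect to be the genuine obstacle rather than the bookkeeping above --- is the interface with the \emph{distributional} payoffs. Theorem \ref{thm:standard-fp-convergence} is stated over $^*\R$, so I would first invoke the transfer principle to justify that the real-valued convergence of fictitious play for the zero-sum reduced game carries over to its hyperreal representation, exactly as argued when interpreting Algorithm \ref{alg:fp}. To make this rigorous in practice I would fall back on the order-preserving real surrogate $y(f)$ from \eqref{eqn:floating-point-approximation}: Corollary \ref{cor:derivative-comparisons} guarantees that replacing each distributional payoff $F_{ij}$ by $y(f_{ij})\in\R$ preserves the $\preceq$-ordering used in the $\min$/$\max$ steps, so the zero-sum reduced game over $^*\R$ and its real surrogate share the same fictitious play dynamics, and Robinson's classical convergence result applies verbatim to the latter. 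Assembling these observations completes the argument.
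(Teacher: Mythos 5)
Your proposal is correct and follows essentially the same route as the paper: chain Sela's lemma (FP in the compound game iff FP in the reduced game) with the observation that $\overline{\Gamma}_{scr}$ is zero-sum, then invoke Theorem \ref{thm:standard-fp-convergence}. Your two additions --- explicitly verifying that $\overline{\Gamma}_{sc}$ has the one-against-all compound structure Sela's lemma requires, and flagging the hyperreal-to-real surrogate $y(f)$ needed for practically meaningful convergence --- are points the paper leaves implicit (the latter only in the remarks following the theorem), so they strengthen rather than diverge from the paper's argument.
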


\subsubsection{The Full Procedure}
Theorem \ref{thm:MGSS-FP} induces the following procedure to compute MGSS
according to definition \ref{def:MGSS}:

Given a two-player multi-goal $\Gamma$ with $d$ payoffs
$u_1^{(1)},\ldots,u_1^{(d)}$ for player 1 (and possibly unknown payoffs for
player 2), we obtain an equilibrium in an MGSS along the following steps:
\begin{enumerate}
  \item Assign strictly positive weights $\alpha_{01},\ldots,\alpha_{0d}$,
      satisfying $\sum_{i=1}^d \alpha_{0i}=1$, to each goal, and set up the
      scalarized auxiliary compound game $\overline{\Gamma}_{sc}$ by virtue
      of expressions \eqref{eqn:opponent-payoffs},
      \eqref{eqn:opponent-payoffs-scalarized} and
      \eqref{eqn:scalarized-compound-game}.
  \item Run fictitious play (algorithm \ref{alg:fp}) in
      $\overline{\Gamma}_{sc}$, stopping when the desirable precision of
      the equilibrium approximation is reached. Preferably for a
      non-approximate solution, solve a sequence of linear programs as
      described in Section \ref{sec:lp}.
  \item The result vector $\vec x$ is directly the sought multi-criteria
      security strategy, whose assurances are given by the respective
      expected payoffs of the opponents (equation
      \eqref{eqn:game-outcome-distribution} in connection with the output
      $\vec y$ of the algorithm).
\end{enumerate}

A few remarks about this method appear in order:
\begin{itemize}
  \item The reduced scalarized game $\Gamma_{scr}$ is a purely theoretical
      vehicle to establish convergence of FP in the scalarized (formerly
      multiobjective) multiplayer game $\overline{\Gamma}_{scr}$. Thus, FP
      is to be executed on $\overline{\Gamma}_{scr}$ by copying and
      adapting lines \ref{lbl:start-player2} until \ref{lbl:end-player2}
      (incl.) in algorithm \ref{alg:fp} for each hypothetical adversary.
  \item Convergence of the game is guaranteed only on the Taylor-polynomial
      approximations for the densities $\vec y(f_{ij})$, but not on the
      true densities $f_{ij}$ as such (for these, the arguments about
      convergence failure discussed around equation
      \eqref{eqn:convergence-threshold} and illustrated in example
      \ref{exa:failure-of-fp} remain intact).
  \item The obtained bounds $v_{up}$ and $v_{low}$ also refer to the
      Taylor-polynomial approximations for the respective bounds on the
      actual value $v$ of the game over the distribution-valued payoffs.
      Thus, the actual assurance limits (distributions $V_i$ given by
      equation \eqref{eqn:ds-req0} in definition \ref{def:MGSS}) must be
      computed from the resulting mixed strategies (as specified in
      algorithm \ref{alg:fp}, line \ref{lbl:fp-finish}, i.e., the formula
      should be implemented for all $d$ goals).
  \item FP and likewise linear programming on the multiplayer game does not
      deliver a concrete (Pareto-Nash) equilibrium strategy for the
      (single) physical adversary, but returns worst-case behavior strategy
      options for every of his $d$ goals. Thus, the result may
      pessimistically underestimate what happens in reality (as the true
      adversary is forced to choose a single out of the multiple options,
      thus necessarily deviating from some of the $d$ equilibrium
      strategies).
  \item The assurances obtained also need to be interpreted bearing in mind
      that here neither player follows the other. That is, the Pareto-Nash
      equilibrium would gives an optimal strategy for the attacker under
      the hypothesis that it plays a zero-sum game against the defender as
      if player 1 would have only this particular goal and no other. The
      real defender, however, will care about multiple goals at the same
      time, thus -- by symmetry -- deviating from the zero-sum equilibrium
      strategy that the attacker has in its own single-goal game. This
      means that the assurance obtained in the last step of the above
      procedure are \emph{not} best replies to the optimal defense, but
      rather the worst-case that would be possible if the defender were to
      spend all its resources on this particular goal.

      Asking for a best reply to the defender's optimal multi-criteria
      strategy is a much simpler issue: for the attacker it merely means to
      adapt by picking the $\preceq$-maximum from the vector $(\vec
      x^*)^T\vec A$, when $\vec A$ is the weighted sum of all payoff
      structures for all goals (doing the scalarization), i.e., the payoff
      structure in the scalarized reduced game constructed in step 2 above.
      This is the case of a \emph{leading-defender and following-attacker},
      always giving a pure worst-case attack scenario, whose payoff is then
      (trivially) a bound to the defender's possible damage.
\end{itemize}



\section{Summary and Outlook}
Implementing the theoretical concepts introduced in \cite{Rass2015b} requires
care to avoid a variety of subtle difficulties. While using
Epanechnikov-kernel based density estimates greatly eases matters of plain
payoff distribution $\preceq$-comparisons, such models are not useful at all
for computing multi-goal security strategies. For these, it is practically
advisable to use Gaussian kernels, being truncated at a common point. At
first, this avoids paradoxical results (as outlined in section
\ref{sec:paradoxical-choices}), but furthermore enables the application of
conventional fictitious play. With example \ref{exa:failure-of-fp}
demonstrating the failure of FP on $^*\R$ when implemented plainly, the
algorithm can nevertheless be successfully applied to proper approximations
of the game matrices. In particular, a double-approximation is applied here,
where the first stage approximates the unknown true distribution by a kernel
density estimate, and the second stage is a Taylor-polynomial expansion at
the tails of the distribution. Despite this approximation performing possibly
badly in areas of low damage (cf. figure \ref{fig:gauss-taylor}), it provides
a good account for the tail-dependent comparison that $\preceq$ is based on.
In other words, even though the Taylor-polynomial is not accurate everywhere
on the relevant interval of possible losses, it nevertheless produces correct
$\preceq$-preferences as these depend on the tails of the distribution, where
the Taylor-approximation is indeed quite accurate. The practical benefit lies
in the ability of running conventional fictitious play on the so-approximated
distribution models, after casting the game-matrix of distributions into a
regular game matrix over the reals.

With the algorithmic aspects being covered in this report, companion (follow
up) work will discuss the modeling and treatment of \emph{advanced persistent
threats}, and how to apply our algorithms to distributions with fat, heavy or
long tails (all of which can be fed back into our algorithms after
truncation, to mention one quick-and-easy solution).

\section*{Acknowledgment}
This work was supported by the European Commission's Project No. 608090,
HyRiM (Hybrid Risk Management for Utility Networks) under the 7th Framework
Programme (FP7-SEC-2013-1). The author thanks Sandra K\"{o}nig from AIT very
much for valuable discussions and for spotting some errors, and also for
providing the demonstration implementation of algorithm \ref{alg:fp}.

The author is also deeply indebted to Ali Alshawish and Vincent Bürgin for
having provided deep insights about the convergence of fictitious play along
intensive experiments. The convergence issue described in Section
\ref{sec:alis-convergence-issues} has been reported by them, and here
repeated in the authors own words. The new computational method using linear
programming was invented in response to this issue.

\bibliographystyle{plain}

\end{document}